\newtheorem{secthm}{Theorem}[section]
\newtheorem{seccor}[secthm]{Corollary}
\newtheorem{secprop}[secthm]{Proposition}
\newtheorem{secprob}[secthm]{Problem}
\newtheorem{secdefn}[secthm]{Definition}
\newtheorem{secrem}[secthm]{Remark}
\newcommand{\cA} { {\mathcal A}}
\newcommand{\cK} { {\mathcal K}}
\newcommand{\bR} { {\mathbb R}}
\newcommand{\offd}[1] {\mbox{\rm off-diag}(#1)}
\newcommand{\1}{\mbox{1}\hspace{-0.25em}\mbox{l}}
\def\red{\hfill $\lhd$}
\def\BibTeX{{\rm B\kern-.05em{\sc i\kern-.025em b}\kern-.08em
    T\kern-.1667em\lower.7ex\hbox{E}\kern-.125emX}}
\begin{document}
\title{$\mathcal{K}$-monotonicity and feedback synthesis \\ for incrementally stable networks}
\author{Yu Kawano, Fulvio Forni
\thanks{This work was supported in part by JSPS KAKENHI Grant Number JP21K14185.}
\thanks{Y. Kawano is with the Graduate School of Advanced Science and Engineering, Hiroshima University, 
Higashi-Hiroshima 739-8527, Japan (e-mail: ykawano@hiroshima-u.ac.jp). }
\thanks{F. Forni is with the Department of Engineering, University of Cambridge,
Cambridge, CB2 1PZ, UK (e-mail: f.forni@eng.cam.ac.uk).}}

\maketitle

\begin{abstract}
We discuss the role of monotonicity in enabling numerically tractable modular control design for networked nonlinear systems. 
We first show that the variational systems of monotone systems can be embedded into positive systems.
Utilizing this embedding, we show how to solve a network stabilization problem by enforcing
monotonicity and exponential dissipativity of the network sub-components. 
Such modular approach leads to a design algorithm based on a sequence
of linear programming problems. 
\end{abstract}

\begin{IEEEkeywords}
Nonlinear systems, networked interconnections, monotonicity, 
dissipativity, contraction analysis 
\end{IEEEkeywords}

\section{Introduction}
A linear system is said to be positive if it maps positive state vectors into positive state vectors,
that is, if the positive orthant $\bR^n_+$ is a forward invariant set for the system dynamics.  
The notion of positivity is often extended to linear systems that admit a generic forward invariant proper cone $\mathcal{K}$
\cite{Smith:08,HS:03,Farina:00}. Perron-Frobenius theory shows that a positive linear system has a dominant (slowest) mode 
constraining its asymptotic behavior to a one-dimensional ray~\cite{Bushell1973,Bushell1973a,HS:03}.
This particular feature allows to study stability and dissipativity 
of these systems using linear forms instead of quadratic ones, 
opening the door to scalable algorithms \cite{Rantzer:15,HCH:10}.

In the nonlinear setting the corresponding class of systems 
are called monotone systems \cite{Smith:08,HS:03,AS:03,FS:16}.
These systems  are characterized by the property that their trajectories preserve a partial order on the
system state space. Nonlinear monotone systems show a strict connection with positivity:
their variational flow guarantees forward invariance of some proper cone $\mathcal{K}$ \cite{FS:16}.
Like for linear systems, monotonicity makes nonlinear analysis scalable \cite{DIR:15,FS:16,WMB:17,FBJ:18,Coogan:19,KBC:20,KB:22}.
However, certifying the monotonicity of a system is generally difficult. From a variational
perspective, the problem corresponds to the determination of a proper cone $\mathcal{K}$ 
that is forward invariant for the variational dynamics \cite{FS:16}. 

The objective of this paper is to take advantage of monotonicity to 
design feedback controllers that  enforce incremental stability and differential 
dissipativity of systems. This leads to numerical methods for modular control of networked nonlinear systems.
Our approach consists of two steps. First, we certify that each network sub-component 
is monotone by showing that its variational flow  
can be represented, via embedding, by the flow of a positive linear time-varying system.
Then, using the embedding, we take advantage of scalable conditions
for stability and dissipativity for monotone systems to characterize network properties.
This leads to conditions for exponential stability of a network of monotone systems 
based on the interconnection structure and on the supply rate of each sub-component 
of the network. 

The problem of deriving the embedding to certify the monotonicity of network
sub-components can be solved by finding a forward invariant cone for their variational dynamics. 
In this paper we take inspiration from the early results of \cite{KF:20}
to develop a numerical algorithm for modular control. The algorithm
co-designs cone and feedback controller to guarantee 
closed-loop monotonicity and exponentially dissipativity.
The algorithm consists of an iteration loop of linear programming problems,
enabling numerically tractable modular design of nonlinear networks.

The novelty of the paper lies in the following results:
(i) conditions to certify monotonicity with respect to partial orders induced by generic cones $\cK$; 
(ii) conditions for incremental exponential stability and exponential differential dissipativity of monotone systems;
and (iii) conditions for incremental stability of networks of monotone systems.
These results can be used both for system analysis and
for control synthesis. In fact, when the cone $\cK$ is known, the conditions of the paper 
reduce to linear programs, which are easy to solve. When the cone $\cK$ is not known, the two algorithms 
of the paper provide procedures for co-derivation of cone and feedback controller,
with the goal of enforcing monotonicity together with incremental stability (closed systems) or differential dissipativity (open systems).

The results of this paper make contact with network stability analysis for positive linear systems \cite{HCH:10, Rantzer:15}.
Specifically, \cite{HCH:10} uses exponential dissipativity to certify the stability of the
positive feedback interconnection of two positive systems, and 
\cite{Rantzer:15} looks at distributed positivity-preserving control design for stability. 
The centralized case has been studied in \cite{RT:07}, where linear programming is used for control design. 
Our paper can be viewed as a direct extension of these earlier results to generic cones $\cK$ (in contrast to the positive orthant),
both for linear and nonlinear systems. It demonstrates how the scalable conditions in \cite{HCH:10, Rantzer:15, RT:07} 
go beyond the positive orthant and linearity.
Our approach is based on embedding a variational system into a positive variational system, which 
corresponds to a (variational) positive realization problem. In the linear case, the forward invariance of some proper 
cone $\mathcal{K}$ is known to be equivalent to the existence of a positive realization \cite{Farina:95, Hof:97, ADF:96, BF:02, BF:04}.
However, in literature, the problem of finding a positive realization and the stabilization problem of a positive system are typically studied in autonomy.
The novelty of this paper is to combine these objectives to arrive at scalable stability and dissipativity conditions for both linear and nonlinear systems.

The remainder of this paper is organized as follows.
After a brief discussion of notation and mathematical preliminaries,
Section~\ref{sec:ME} provides a motivating example. The goal is to justify
our effort towards scalable conditions for analysis and synthesis of positive (linear) and monotone (nonlinear) systems.
Section~\ref{sec:ML} introduces the class of positive linear systems
and derive technical conditions for their stability and dissipativity. The latter opens
the way to the characterization of stable networks. Most of the results of Section~\ref{sec:ML}
are generalized to nonlinear monotone systems in Section~\ref{sec:MN}.
In particular, Section~\ref{sec:MN} provides conditions for monotonicity, incremental exponential stability (or contraction) and differential dissipativity.
Finally, Section~\ref{sec:AC} takes advantage of these results to provide numerical algorithms for system analysis and control design.
All proofs are in appendix.
\vspace{2mm}

\noindent
\textbf{Notation and mathematical preliminaries:}

The vector whose all elements are $1$ is denoted by $\1$.
The vector $1$-norm is denoted by $| \cdot |_1$.
The identity matrix is denoted by $I$.
For matrices $A,B \in \bR^{n \times m}$, 
we write $A \le B$ ($A < B$) if and only if 
$A_{i,j} \le B_{i,j}$ ($A_{i,j} < B_{i,j}$) for all $i=1,\dots,n$ and $j=1,\dots,m$, where 
$A_{i,j}$ denotes the $(i,j)$-th component of $A$. 
Similarly, for $c \in \bR$ and $A \in \bR^{n \times m}$, $c \le A$ ($c < A$) means
$c \le A_{i,j}$ ($c < A_{i,j}$) for all $i=1,\dots,n$ and $j=1,\dots,m$.
For a square matrix~$A \in \bR^{n \times n}$, 
$\offd{A} \ge 0$ means its off-diagonal elements are all non-negative.
Also, $|\offd{A}|$ denotes the matrix obtained by taking the absolute value of each off-diagonal elements of $A$.

A subset~$\cK \subset \bR^n$ is called a \emph{simple cone} if it is
(i) a cone, i.e. $c \cK \subset \cK$ for any scalar~$c \ge 0$,  
(ii) closed; 
(iii) convex, i.e. $\cK + \cK \subset \cK$; and
(iv) pointed, i.e. $-\cK\cap \cK =\{0\}$.
Moreover, a simple cone is called \emph{proper} if it is
(v) solid, i.e. ${\rm int}(\cK ) \neq \emptyset$. 

A cone $\cK \subset \bR^n$ is said to be a \emph{polyhedral cone} if 
there exists a finite set of vectors $K_i \in \bR^n \setminus \{0\}$, for $i =1,\dots,p_{\cK}$ such that
\begin{align*}
&\cK = \{x \in \bR^n:  K x \ge 0\} \mbox{, where }
K := \begin{bmatrix} K_1 & \cdots &K_{p_{\cK}} \end{bmatrix}^\top.
\end{align*}
A polyhedral cone is closed and convex.
A polyhedral cone $\cK$ is simple if and only if ${\rm rank} \; K = n$.
For example, the canonical basis $e_i \in \bR^n$, $i\in \{1,\dots,n\}$, corresponds to the positive orthant $\bR_+^n$.
For more details about cones, see \cite{BP:94}.

\section{Motivating Example: Mass-Spring system}\label{sec:ME}
Consider the following mass-spring system: 
\begin{align}\label{sys:mech}
\left\{\begin{array}{r@{}l}
\begin{bmatrix}
\dot x_1 \\ 
\dot x_2
\end{bmatrix}
&{}=
\begin{bmatrix}
0 & 1 \\
-k & 0
\end{bmatrix}
\begin{bmatrix}
x_1 \\ 
x_2
\end{bmatrix}
+
\begin{bmatrix}
0 \\ 
1
\end{bmatrix} \bar u, 
\quad 
k>0 \\
y &{}= x_1.
\end{array}\right.
\end{align}
We investigate a feedback controller:
\begin{align*}
\bar u = F_1 x_1 + F_2  x_2 + u
\end{align*}
to make the closed-loop system  positive with respect to the positive orthant $\bR^n_+$ and exponentially stable. 
This form of positivity requires
\begin{align}\label{pos:mech}
F_1 - k \ge 0,
\end{align}
since, by the Kamke condition \cite{Smith:08,AS:03}, the off-diagonal elements of the state matrix
must be non-negative. 
For exponential stability, the eigenvalues of the closed-loop system
\begin{align*}
\lambda = \frac{F_2 \pm \sqrt{F_2^2 + 4 (F_1 - k)}}{2}
\end{align*} 
have the negative real parts if and only if
\begin{align}\label{ES:mech}
\left\{\begin{array}{l}
F_2 < 0\\
F_1 - k < 0.
\end{array}\right.
\end{align}
Thus, positivity with respect to the positive orthant and exponential stability \eqref{ES:mech} are 
incompatible for the controlled mass-spring system.

Apparently these conditions suggest that exponential stability can be enforced at the cost of waiving positivity 
with respect to the positive orthant and its scalable design features \cite{Rantzer:15,HCH:10}.
However, as will be shown later, scalability can be retained by relying on positivity with respect to the polyhedral cone~$\cK \subseteq \bR^2$
 defined by
\begin{align}\label{cone:mech}
K
= 
\begin{bmatrix}
K_1 & K_2
\end{bmatrix}^\top,
\quad
K_1:= \begin{bmatrix} 1 \\ 0 \end{bmatrix}, \;
K_2 :=  \begin{bmatrix} 2 \\ 1 \end{bmatrix}.
\end{align}
According to Theorem~\ref{thm:LM} below, the closed-loop system is positive with respect to $\cK$, 
(i.e. $\cK$ is a forward invariant set of the closed-loop system) \cite{Smith:08, AS:03, HS:06}
if and only if
\begin{align}\label{CI:mech}
F_2 \le (- 4 + F_1 - k)/2.
\end{align} 
Notably, \eqref{ES:mech} and \eqref{CI:mech} characterize a \emph{feasible} set of linear constraints in $F_1$ and $F_2$, 
ensuring positivity with respect to $\cK$ and exponential stability of the controlled mass-spring system.

\begin{secrem}
If a pair $(A, B)$ is stabilizable, there exists a feedback gain $F$ rendering the closed-loop system $\dot x = (A + BF) x$ both exponentially stable and $\cK$-positive with respect to a proper cone. From stabilizability, we can shift all unstable eigenvalues of $A$ to arbitrary values by selecting suitable $F$. In particular, one can make $A+ B F$ has the simple dominant eigenvalue $0> \lambda_1 \in \bR$, i.e., $\lambda_1 > {\rm Re} (\lambda_i)$ for the other eigenvalues $\lambda_i$ of $A+ B F$. This implies that all eigenvalues of $A+ B F$ are in the left-half plane and the exponential stability. Next, $e^{\lambda_1 t} > |e^{\lambda_i t}|$ for all $\lambda_i$, $i \neq 1$. According to \cite[Theorem 3.5]{BP:94}, this implies $e^{(A+ B F)t} \cK \subset \cK$, i.e., $\cK$-positivity with respect to a proper cone $\cK$. Moreover, $\cK$ is time-invariant  from its construction in \cite[Theorem 3.5]{BP:94}, since the (generalized) eigenvectors of $e^{(A+ B F)t}$ and $A+ B F$ are the same and time-invariant. In this paper, to develop numerically tractable algorithms, we require a proper cone $\cK$ to be polyhedral. A proper cone can always be approximated by a polyhedral cone by increasing $p_\cK$.~\red
\end{secrem}

An important observation is that 
the change of coordinates $z = K x$
yields a closed-loop system that is positive with respect to $\bR_+^2$:
\begin{align*}
\left\{\begin{array}{r@{}l}
\dot z &{}=
\begin{bmatrix}
-2 & 1 \\
 - k - 4 + F_1 - 2 F_2 & 2 + F_2
\end{bmatrix}
z
+
\begin{bmatrix}
0 \\ 
1
\end{bmatrix} u \\[2.5mm]
y &{}= \begin{bmatrix}
1 & 0
\end{bmatrix}
z.
\end{array}\right.
\end{align*}
This shows how scalable approaches for positive systems with respect to the positive orthant \cite{Rantzer:15,HCH:10} 
are directly applicable to general positive systems through the mapping $z = K x$. 
Fundamentally, this example clarifies how the unfeasibility of \eqref{pos:mech} and \eqref{ES:mech} 
is related to the coordinate representation of the system and does not describe 
an intrinsic feature of the system.

In general, $K$ is not necessarily square but always satisfies ${\rm rank} \; K = n$ for $x \in \bR^n$ if $\cK$ is a proper (or more generally, simple) polyhedral cone.
This implies that $z = K x$ is an embedding yielding 
a positive realization in the $z$-coordinates.
This has two important consequences: 
1) scalable conditions for positive linear systems with respect to the positive orthant are directly applicable to general 
positive systems by working in the $z$-coordinates, via embedding;
2) stability in the $z$-coordinates implies stability in the $x$-coordinates.

Encouraged by these observations, in what follows, we take advantage of the \emph{embedding trick} to 
develop scalable analysis and control design methods for positive linear systems and for monotone nonlinear systems.
We will start by assuming that the forward invariant cone $\mathcal{K}$ is given.
Later on, we will propose an algorithm to co-design both cone and controller 
to enforce monotonicity and exponential stability simultaneously.

\section{Positive Linear Systems}\label{sec:ML}

\subsection{Positivity and Stability}
In this section we recall the definition of positivity for linear systems
and we show how scalable methods for positive linear systems with respect to the positive orthant  \cite{Rantzer:15,HCH:10} 
can be adapted to any positive linear system.

\begin{secdefn}
\label{def:positivity}
The open linear system $\Sigma$
\begin{align}
\Sigma: 
\left\{\begin{array}{r@{}l}
\dot x &{}= A x + B u\\
y &{}= C x
\end{array}\right.
\quad
(x, u, y) \in \bR^n \times \bR^m \times \bR^p
\end{align}
is said to be \emph{positive} with respect to 
a triplet of polyhedral cones $(\cK, \bR_+^m, \bR_+^p)$ if,
for all $(x(\cdot),u(\cdot),y(\cdot)) \in \Sigma$,
\begin{align}
\begin{array}{c}
x (0) \in \cK, \; u(t) \in \bR_+^m,			\ \forall t \!\ge \! 0. \\
\implies \\
 x (t) \in \cK, \; y(t) \in \bR_+^p,	\ \forall t \!\ge \! 0
\end{array}
\end{align}

For a closed system $\dot x = Ax$, positivity simply means that
$x (0) \in \cK$ implies $x (t) \in \cK$ for any $t \ge 0$.
\red
\end{secdefn}

In what follows we will adopt the simpler terminology of $\mathcal{K}$\emph{-positive linear systems}
for the systems in Definition \ref{def:positivity}, leaving the input and output cones implicit.

According to \cite{Rantzer:15,HCH:10}, a necessary and sufficient conditions for a linear system to be $\bR_+^n$-positive is
\begin{align}\label{cond:PR}
\offd{A} \ge 0,
\quad
B \ge 0,
\quad
C \ge 0.
\end{align}
The following theorem provides equivalent conditions for general $\mathcal{K}$-\emph{positive} linear systems.

\begin{secthm}[positivity]\label{thm:LM}
An open linear system $\Sigma$ is $\cK$-positive 
if and only if there exist $P  \in \bR^{p_{\cK} \times p_{\cK}}$, $G \in \bR^{p_{\cK} \times m}$, and $H \in \bR^{p \times p_{\cK}}$ such that
\begin{subequations}\label{cond:LM}
\begin{align}
K A = P K,
&\quad
\offd{P} \ge 0
\label{condA:LM}\\
G= K B,
&\quad
G \ge 0
\label{condB:LM}\\
C = H K,
&\quad
H \ge 0.
\label{condC:LM}
\end{align}
\end{subequations}
\end{secthm}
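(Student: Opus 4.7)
The plan is to use the embedding $z = Kx$ as the bridge between $\cK$-positivity of $\Sigma$ and standard $\bR_+^{p_\cK}$-positivity of an auxiliary system, so that the known characterization \eqref{cond:PR} can be invoked. Since $\cK$ is a simple polyhedral cone, $K$ has full column rank, so this is an injective embedding and $x \in \cK \Leftrightarrow Kx \ge 0 \Leftrightarrow z \ge 0$.

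For sufficiency, I would assume the three factorizations \eqref{cond:LM} and differentiate $z=Kx$ to obtain
\begin{align*}
\dot z = K\dot x = KAx + KBu = PKx + Gu = Pz + Gu, \qquad y = Cx = HKx = Hz.
\end{align*}
Since $\offd{P}\ge 0$, $G\ge 0$, $H\ge 0$, the $z$-system is $\bR_+^{p_\cK}$-positive by \eqref{cond:PR}. Consequently, $x(0)\in\cK$ and $u(t)\ge 0$ give $z(0)\ge 0$, hence $z(t)\ge 0$ and $y(t)=Hz(t)\ge 0$ for all $t\ge 0$, i.e. $x(t)\in\cK$ and $y(t)\ge 0$.

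For necessity, I would treat the three factorizations separately and extract each one from an elementary test on reachable trajectories. The output factorization is the easiest: $\cK$-positivity forces $Cx\ge 0$ for every $x\in\cK$, i.e. every row $C_i$ is non-negative on $\{Kx\ge 0\}$; Farkas' lemma then gives $C_i = H_i K$ with $H_i\ge 0$, whence $C = HK$, $H\ge 0$. The input factorization follows from forcing $x(0)=0$ and a constant input $u(t)\equiv v\ge 0$: by variation of constants, $x(t)= tBv + o(t)$ must lie in $\cK$ for small $t>0$, so $Bv\in\cK$ for all $v\ge 0$, i.e. $KB\ge 0$; set $G:=KB$.

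The main obstacle, and the step that deserves the most care, is the state factorization $KA=PK$ with $\offd P\ge 0$. Setting $u\equiv 0$, forward invariance of $\cK$ under $\dot x=Ax$ is governed by Nagumo's tangential condition: for each facet index $i$, whenever $Kx\ge 0$ and $K_i x=0$, one must have $K_i Ax\ge 0$. Applying Farkas' lemma to this implication yields non-negative coefficients $P_{ij}\ge 0$ ($j\ne i$) and an unconstrained $P_{ii}\in\bR$ with
\begin{align*}
K_i A = \sum_{j\ne i} P_{ij} K_j + P_{ii} K_i.
\end{align*}
Stacking the rows gives the required $P$ with $\offd{P}\ge 0$ and $PK=KA$. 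The subtlety here is that Farkas needs a non-degenerate face (i.e. the facet $\{K_i x=0\}\cap\cK$ must be genuinely a facet and not disappear); this is exactly where simplicity of $\cK$ (equivalently, full column rank of $K$) pays off, by ensuring that the rows of $K$ are extremal supports and the Farkas implication is tight. Once $P$ is produced, the three factorizations $(P,G,H)$ are the claimed certificates, closing the proof.
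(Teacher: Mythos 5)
Your proof is correct and takes essentially the same route as the paper, which establishes Theorem~\ref{thm:LM} as the linear special case of Theorem~\ref{thm:NM}: sufficiency via the embedding $z=Kx$ into an $\bR_+^{p_\cK}$-positive system, and necessity via the tangency condition on each face followed by Farkas' lemma (the paper shifts by $c_i(x)I$ to apply Farkas over the whole cone $\{Kx\ge 0\}$, whereas you keep the equality constraint $K_ix=0$ and obtain a free diagonal entry --- an equivalent bookkeeping). Your closing worry is unfounded: Farkas' lemma for the implication $\{Kx\ge 0,\ K_ix=0\}\Rightarrow K_iAx\ge 0$ holds for any polyhedral system with no non-degeneracy or full-rank hypothesis, so simplicity of $\cK$ is not what makes that step work.
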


\begin{proof}
This is a special case of Theorem~\ref{thm:NM}.
\end{proof}

The properness (more precisely, simplicity) of $\cK$ implies ${\rm rank} \; K = n$.
This means that $z = K x$ is an embedding.
In the $z$-coordinates, \eqref{cond:LM} yields
\begin{align}
\Sigma^P: 
\left\{\begin{array}{r@{}l}
\dot z &{}= P z + G u \\
y &{}= H z.
\end{array}\right.
\end{align}
Note that, starting from a $\mathcal{K}$-positive system $\Sigma$, the embedded system $\Sigma^P$ is an $\bR_+^{p_{\cK}}$-positive system, 
as it satisfies \eqref{cond:PR} for $A=P$, $B=G$, and $C=H$.
Therefore, verifying $\mathcal{K}$-positivity of $\Sigma$ is equivalent to 
finding its $\bR_+^{p_{\cK}}$-positive realization.
Since the mapping $z = K x$ is an embedding, 
the obtained $\bR_+^{p_{\cK}}$-positive realization is not minimal, in general.
Finding an $\bR_+^{p_{\cK}}$-positive realization has been investigated in \cite{Farina:95, Hof:97, ADF:96, BF:02, BF:04}. 
In fact, \eqref{cond:LM} can be viewed as a dual of the condition in \cite[Theorem 1]{ADF:96} or \cite[Theorem 5]{BF:02}.

The following theorem takes advantage of the embedding to derive stability conditions for 
$\cK$-positive closed system.
\begin{secthm}[exponential stability]\label{thm:ES}
Let $\cK$ be a simple cone.
A $\cK$-positive linear system $\dot x = Ax$ is 
exponentially stable if there exist $P$ satisfying \eqref{condA:LM} and $v \in \bR^{p_{\cK}}$ such that
\begin{subequations}\label{cond:ES}
\begin{align}
v &> 0\\
- v^\top P &> 0.
\end{align}
\end{subequations}
\end{secthm}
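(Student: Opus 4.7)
The plan is to exploit the embedding $z=Kx$ to reduce exponential stability of $\dot x = Ax$ to exponential stability of the Metzler system $\dot z = Pz$ on $\bR^{p_{\cK}}$, and then to use $V(z)=v^\top z$ as a linear copositive Lyapunov function on the positive orthant.

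Since $\cK$ is simple, $K$ is of full column rank, so $z=Kx$ is injective and there exists $c>0$ with $|x|_1 \le c\,|Kx|_1$ for every $x\in\bR^n$. From \eqref{condA:LM} we get $KA^k = P^k K$ for all $k\ge 0$, hence $K e^{At}=e^{Pt}K$, so $Kx(t)=e^{Pt}Kx(0)$ along every trajectory. Moreover $\offd{P}\ge 0$ makes $P$ Metzler, which (by writing $P=-\gamma I+(P+\gamma I)$ with $P+\gamma I\ge 0$ for $\gamma$ large) gives $e^{Pt}\bR_+^{p_{\cK}}\subseteq \bR_+^{p_{\cK}}$ for all $t\ge 0$.

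Next, on $\bR_+^{p_{\cK}}$ the hypotheses $v>0$ and $-v^\top P>0$ yield constants $\alpha,\beta,\bar v>0$ such that $\alpha|z|_1 \le V(z) \le \bar v\,|z|_1$ and $\dot V = v^\top P z \le -\beta|z|_1$ for every $z\ge 0$. Consequently $\dot V \le -\lambda V$ with $\lambda := \beta/\bar v$, and Gr\"onwall's inequality gives $V(e^{Pt}z(0))\le e^{-\lambda t}V(z(0))$, so that $|e^{Pt}z(0)|_1$ decays exponentially whenever $z(0)\in\bR_+^{p_{\cK}}$.

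To extend the decay to arbitrary $x(0)\in\bR^n$, decompose $Kx(0)=z^+-z^-$ with $z^\pm\in\bR_+^{p_{\cK}}$, which is possible because the positive orthant is always solid in $\bR^{p_{\cK}}$. By linearity and Metzler positivity, $e^{Pt}Kx(0)=e^{Pt}z^+-e^{Pt}z^-$ with both summands in $\bR_+^{p_{\cK}}$ and exponentially decaying by the previous step. Hence $|Kx(t)|_1$ decays exponentially, and $|x(t)|_1\le c\,|Kx(t)|_1$ transfers the bound to the $x$-coordinates. The main obstacle is exactly this extension step: the Lyapunov argument only lives on $\bR_+^{p_{\cK}}$, while exponential stability of $\dot x=Ax$ concerns arbitrary initial conditions; since $\cK$ is only required to be \emph{simple}, and thus need not be solid in $\bR^n$, one cannot split $x(0)$ into positive parts inside $\cK$, and the remedy is to perform the positive/negative splitting \emph{after} embedding.
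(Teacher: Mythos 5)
Your proof is correct and follows essentially the same route as the paper, which establishes Theorem~\ref{thm:ES} as the linear special case of Theorem~\ref{thm:IES}: embed via $z=Kx$ into the Metzler system $\dot z = Pz$, run the linear Lyapunov function $v^\top z$ on the positive orthant, split a general initial condition into a difference of nonnegative vectors \emph{after} embedding, and transfer the decay back through the full column rank of $K$. Your closing observation about why the splitting must happen in the orthant rather than in the (possibly non-solid) cone $\cK$ is exactly the point the paper's appendix argument relies on.
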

\begin{proof}
This is a special case of Theorem \ref{thm:IES}.
\end{proof}

In literature, the problem of finding an $\bR_+^{p_{\cK}}$-positive realization for a system and the stability problem are typically studied in autonomy.
The novelty of Theorem \ref{thm:ES} is to combine these two properties, taking advantage of \eqref{condA:LM}
to enforce additional linear constraints \eqref{cond:ES} to certify stability. Note that \eqref{cond:ES} 
is strictly related with the stability conditions for $\bR_+^{p_{\cK}}$-positive linear systems of \cite{Rantzer:15,HCH:10}. 

When $\cK$ is proper, the conditions of the theorem can be easily justified by using the embedding.
Without, Theorem~\ref{thm:ES} can be proven by employing
the linear Lyapunov function $V (x) = v^\top K x$, $x \in \cK$.
It follows from \eqref{cond:ES} that $V(x) > 0$ and
\begin{align}\label{cond:ESK}
\dot V(x) = v^\top K A x = v^\top P  K x < 0
\end{align}
for any $x \in \cK \setminus \{0\}$.
This approach is detailed in our preliminary work \cite{KF:21}, which does not rely on embeddings.

\subsection{Positivity and Dissipativity}

For open systems, similar conditions can be derived in terms of dissipativity analysis.
In particular, the notion of exponential dissipativity for cooperative systems \cite[Definition 5.3]{HCH:10} 
provides a useful framework for the analysis of networks \cite[Theorem 5.5]{HCH:10}. 
The following definition extends the
notion of exponential dissipativity to $\cK$-positive systems.

\begin{secdefn}\label{def:ED}
A $\cK$-positive open linear system $\Sigma$ is \emph{exponentially dissipative}
with respect to the supply rate $s(u,y) = r^\top u - q^\top y$, $(r, q) \in \bR^m \times \bR^p$, if
there exists $0 <\lambda \in \bR $ and $ 0 < v \in \bR^n$ such that
\begin{align}\label{eq:ED}
e^{\lambda t_2} v^\top K x(t_2) 
\ \le \ & e^{\lambda t_1} v^\top K x(t_1) \nonumber\\
& + \int_{t_1}^{t_2} e^{-\lambda t} (r^\top u(t) - q^\top y(t)) dt,
\end{align}
for all $t_2 \geq t_1 \geq 0$ and all $(x(\cdot),u(\cdot),y(\cdot)) \in \Sigma$ that satisfy $(x(t),u(t),y(t)) \in (\cK,  \bR_+^m, \bR_+^p)$ for $t \geq 0$.
\red
\end{secdefn}

Note that for $Kx \geq 0$ for all $x \in \cK$. This implies that, for any vector $v \geq 0$, $v^\top Kx: \mathcal{K} \mapsto \bR_{\geq 0}$ is a storage function
(zero if $x = 0$). As usual in dissipativity theory, Definition \ref{def:ED} relates the variation of the storage along system trajectories 
to the the input/output behavior of the system, via the integral of the supply. 

Using the embedding, we obtain the following conditions for exponential dissipativity.
\begin{secthm}[exponential dissipativity]\label{thm:ED}
A $\cK$-positive open linear system $\Sigma$ is 
exponentially dissipative with respect to a supply rate $r^\top u - q^\top y$
if there exist $P$, $G$, and $H$ satisfying \eqref{cond:LM}
and $v \in \bR^{p_{\cK}}$ such that
\begin{subequations}\label{cond:ED}
\begin{align}
v &> 0\\
- v^\top P - q^\top H &> 0\\
r^\top - v^\top G &\ge 0.
\end{align}
\end{subequations}
\end{secthm}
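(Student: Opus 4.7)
The plan is to lift the proof to the embedded coordinates $z = Kx$ and use the linear storage $V(x) := v^\top K x$, which is precisely the standard linear storage $v^\top z$ for $\bR_+^{p_\cK}$-positive linear systems. The starting observation is that $V$ is a valid storage on $\cK$: from $v > 0$ and $Kx \ge 0$ for $x \in \cK$, we have $V(x) \ge 0$ with $V(0) = 0$.

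The next step is a straightforward differentiation along trajectories, invoking \eqref{condA:LM}--\eqref{condC:LM} to pull $A$, $B$, $C$ through $K$:
\begin{align*}
\dot V(x) - s(u,y)
&= v^\top K(Ax+Bu) - r^\top u + q^\top y \\
&= (v^\top P + q^\top H)\, Kx + (v^\top G - r^\top)\, u.
\end{align*}
Because $Kx \ge 0$ for $x\in\cK$ and $u\ge 0$, conditions \eqref{cond:ED} immediately give $\dot V - s \le 0$. To upgrade this to exponential dissipativity, I would pick $\lambda > 0$ small enough that the strict, componentwise inequality $-v^\top P - q^\top H > \lambda v^\top$ is preserved; this is possible exactly because $v > 0$ and $-v^\top P - q^\top H > 0$ are both strictly positive componentwise (so $\lambda$ can be chosen as a small uniform fraction of the minimum ratio). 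With this $\lambda$, the same calculation yields
\[
\dot V(x) + \lambda V(x) - s(u,y) = (v^\top P + q^\top H + \lambda v^\top)\,Kx + (v^\top G - r^\top)\,u \;\le\; 0.
\]

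Finally, I would integrate the differential dissipation inequality. Writing $\frac{d}{dt}\bigl(e^{\lambda t} V(x(t))\bigr) = e^{\lambda t}(\dot V + \lambda V) \le e^{\lambda t} s(u(t), y(t))$ and integrating from $t_1$ to $t_2$ produces the exponential dissipativity bound of Definition~\ref{def:ED} with storage $V(x) = v^\top K x$.

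There is no real obstacle here: once the embedding $z = Kx$ is in place, the proof reduces to the classical linear-storage Lyapunov computation for cooperative systems as in \cite{HCH:10}, and the structure \eqref{cond:LM} is exactly what allows one to replace the original state matrices by their positive counterparts $P, G, H$ without ever inverting $K$. The only point requiring a moment of care is the existence of a valid $\lambda > 0$, and this is handled by the strictness of the inequalities in \eqref{cond:ED}.
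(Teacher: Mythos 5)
Your proof is correct and follows essentially the same route as the paper: the paper establishes the nonlinear generalization (Theorem~\ref{thm:EDD}) by exactly this linear-storage computation with $V = v^\top K\,\delta x$, deriving the differential dissipation inequality \eqref{eq2:EDD} from \eqref{cond:LM} and the sign conditions and then integrating, and declares Theorem~\ref{thm:ED} a special case. The only cosmetic discrepancy is that your integrated bound carries $e^{\lambda t}$ inside the integral, which matches Definition~\ref{def:EDD} rather than the (apparently typographical) $e^{-\lambda t}$ appearing in Definition~\ref{def:ED}.
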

\begin{proof}
This is a special case of Theorem \ref{thm:EDD}.
\end{proof}

Theorem \ref{thm:ED} opens the way to the analysis of interconnections of 
subsystems that are positive with respect to triplets of simple polyhedral cones $(\cK_i, \bR_+^{m_i}, \bR_+^{p_i})$:
\begin{equation} \label{eq:n}
\Sigma_i: 
\left\{\begin{array}{r@{}l}
\dot x_i &{}= A_i x_i + B_i u_i\\
y_i &{}= C_i x_i
\end{array}\right.
\quad
i=1,\dots, N,
\end{equation}
where $x_i \in \bR^{n_i}$, $u_i \in \bR^{m_i}$, and $y_i \in \bR^{p_i}$.
Interconnections are represented by
\begin{align}\label{eq:nc}
u_i = \sum_{j = 1}^N W_{i,j}  \, y_j, \quad i =1,\dots,N.
\end{align}

The exponential stability of the network \eqref{eq:n}, \eqref{eq:nc} 
can be characterized in terms of the exponential dissipativity of
its components and of their interconnection, as clarified by the following
theorem.
\begin{secthm}[network stability]\label{thm1:net}
Let the open linear systems $\Sigma_i$ be $\cK_i$-positive, for $i = 1,\dots,N$, 
where $\cK_i$, $i=1,\dots,N$ are all simple.
Suppose that \vspace{1.5mm}
\begin{itemize}
\renewcommand{\theenumi}{\Roman{enumi}}
\item[(i)] each $\Sigma_i$ satisfies Theorem \ref{thm:ED} for $(v_i, r_i, q_i)$; \vspace{2mm}
\item[(ii)]  $W_{i,j} \ge 0$ for all $i,j = 1, \dots, N$;
\item[(iii)]  $q_j^\top - \sum\nolimits\limits_{i=1}^N r_i^\top  W_{i, j} \ge 0$, for all $j= 1,\dots, N$.
\end{itemize}
Then the network \eqref{eq:n}, \eqref{eq:nc} is
$(\cK_1 \times \cdots \times \cK_N)$-positive and exponentially stable. 
\end{secthm}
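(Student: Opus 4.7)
The approach mirrors the dissipativity-based network argument for orthant-positive systems \cite[Theorem~5.5]{HCH:10}: assemble the individual storage functions from Theorem~\ref{thm:ED} into a single composite storage, and exploit the interconnection together with condition~(iii) to make the aggregate supply non-positive. Working with $V_i := v_i^\top K_i x_i$ in place of $v_i^\top x_i$ transplants the argument from the positive orthant to an arbitrary simple cone through the embedding $z_i = K_i x_i$.

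First I would verify that $\cK_1 \times \cdots \times \cK_N$ is forward invariant for the closed network and that every trajectory on this product cone satisfies $(x_i(t), u_i(t), y_i(t)) \in (\cK_i, \bR_+^{m_i}, \bR_+^{p_i})$ for all $t \ge 0$. The $\cK_i$-positivity of $\Sigma_i$ gives $x_i(t) \in \cK_i$ and $y_i(t) \ge 0$ whenever $x_i(0) \in \cK_i$ and $u_i(\cdot) \ge 0$; condition~(ii), $W_{i,j} \ge 0$, then forces $u_i = \sum_{j} W_{i,j} y_j \ge 0$, closing the loop of nonnegativity by a standard bootstrap argument.

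Second, I would introduce the composite storage $V(x) := \sum_{i=1}^N v_i^\top K_i x_i$. Because $v_i > 0$, $K_i x_i \ge 0$ on $\cK_i$, and each $K_i$ is of full column rank by simplicity of $\cK_i$, $V$ is non-negative and equivalent to $\|x\|$ on the product cone. Passing to the differential form of Definition~\ref{def:ED}, each $\Sigma_i$ satisfies $\dot V_i + \lambda_i V_i \le r_i^\top u_i - q_i^\top y_i$ with some $\lambda_i > 0$ supplied by the strict inequality in~\eqref{cond:ED}. Setting $\lambda := \min_i \lambda_i > 0$ and using $V_i \ge 0$ gives $\dot V_i + \lambda V_i \le r_i^\top u_i - q_i^\top y_i$ for every~$i$. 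Summing, substituting $u_i = \sum_j W_{i,j} y_j$ and swapping the order of summation,
\[
\dot V + \lambda V \le \sum_{i=1}^N \big(r_i^\top u_i - q_i^\top y_i\big) = -\sum_{i=1}^N \Big(q_i^\top - \sum_{j=1}^N r_j^\top W_{j,i}\Big) y_i \le 0
\]
by~(iii) and $y_i \ge 0$. Hence $V(x(t)) \le e^{-\lambda t} V(x(0))$, and norm-equivalence of $V$ on the product cone converts this into exponential decay of $\|x(t)\|$.

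The only subtlety I anticipate is the alignment of the individually-produced decay rates $\lambda_i$ into a single common rate; the minimum-rate trick above is standard but is cleaner in the differential form than in the integral form of Definition~\ref{def:ED}. Beyond that, the proof is algebraic and the extension from the positive orthant to a simple cone introduces no new difficulties, because the embedding $z_i = K_i x_i$ reduces each $v_i^\top K_i x_i$ to the canonical linear storage for the $\bR_+^{p_{\cK_i}}$-positive realization of $\Sigma_i$ supplied by Theorem~\ref{thm:LM}.
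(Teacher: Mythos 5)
Your core dissipativity bookkeeping coincides with the paper's: Theorem~\ref{thm1:net} is proved there as a special case of Theorem~\ref{thm3:net}, whose argument stacks the vectors into $v=[\,v_1^\top \ \cdots\ v_N^\top]^\top$ and establishes the vector inequality $-v^\top P(x)\ge \lambda v^\top$ for the closed-loop embedded matrix $P=\mathrm{diag}\{P_i\}+[G_iW_{i,j}H_j]_{i,j}$; multiplying by $z=Kx\ge 0$ recovers exactly your scalar chain $\dot V+\lambda V\le\sum_i(r_i^\top u_i-q_i^\top y_i)=-\sum_i\bigl(q_i^\top-\sum_j r_j^\top W_{j,i}\bigr)y_i\le 0$. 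One minor issue is your positivity step, which is circular as written: you need $y_j(\cdot)\ge 0$ up to time $t$ to conclude $u_i(\cdot)\ge 0$, and vice versa. This is patchable (Picard iteration or an $\varepsilon$-perturbation argument), but the paper avoids it entirely by checking the algebraic condition $\offd{P}\ge 0$ on the closed-loop embedded matrix, so that Theorem~\ref{thm:LM} applies to the network directly.

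The substantive gap is at the end. Your estimate $V(x(t))\le e^{-\lambda t}V(x(0))$, together with norm-equivalence of $V$ on the cone, gives exponential decay only for initial conditions in $\cK_1\times\cdots\times\cK_N$, whereas exponential stability of the linear network requires decay from every $x(0)\in\bR^{n_1+\cdots+n_N}$. Because the $\cK_i$ are only assumed \emph{simple} (closed, convex, pointed) and not solid, the product cone need not be reproducing, so you cannot in general decompose an arbitrary $x(0)$ as a difference of two elements of the cone and invoke linearity in the $x$-coordinates. The paper resolves this inside the proof of Theorem~\ref{thm:IES}: it works with the embedded system $\dot z=Pz$ on all of $\bR^{p_{\cK}}$, where \emph{every} initial condition splits as $z(0)=z_1(0)-z_2(0)$ with $z_1(0),z_2(0)\in\bR_+^{p_{\cK}}$, obtains the decay estimate for generic $z$ by linearity, and then pulls it back to $x$ using the full column rank of $K$. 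You need to add this orthant-level decomposition (or an equivalent step) to upgrade your cone-restricted decay to genuine exponential stability.
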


\begin{proof}
This is a special case of Theorem \ref{thm3:net}.
\end{proof}

Condition (i) guarantees exponential dissipativity. Condition (ii) guarantees that the interconnection of positive subsystems leads
to a positive network. Finally, Conditions (iii) allows to build a decaying Lyapunov
function for the network, as the combination of the storages of the subsystems. This guarantees network stability.

\section{Monotone Nonlinear Systems}\label{sec:MN}

\subsection{Monotonicity and Incremental Stability}\label{sec:DC}
The main theorems of Section \ref{sec:ML} can be extended to 
monotone nonlinear systems by taking advantage of differential analysis \cite{FS:16,FS:14cdc} and contraction theory \cite{LS:98,FS:14}.
These systems  are characterized by the property that their trajectories preserve a partial order, as clarified by Definition \ref{def:monotonicity}.

Consider the nonlinear system described by
\begin{align} \label{eq:nonlinsys}
\bar \Sigma: 
\left\{\begin{array}{r@{}l}
\dot x &{}= f(x) + B u\\
y &{}= C x,
\end{array}\right.
\end{align}
where $x \in \bR^n$, $u \in \bR^m$, $y \in \bR^p$, and $f: \bR^n \to \bR^n$ is of class $C^2$. 
Recall that any simple cone $\cK$ induces a partial order $\preceq_{\cK}$ given by
\begin{align*}
x_1\preceq_\cK x_2 \quad \mbox{iff} \quad x_2 - x_1 \in \cK.
\end{align*}
This is the original definition of monotonicity  \cite{AS:03,Smith:08}.
\begin{secdefn} \label{def:monotonicity}
The nonlinear system~$\bar \Sigma$ is \emph{monotone} with respect to
the partial order induced by the triplet of  polyhedral cones $(\cK, \bR_+^m, \bR_+^p)$
if
\begin{align}
\begin{array}{c}
x(0) - x'(0)  \in \cK, \; u(t) - u'(t) \in  \bR_+^m, \ \forall t \! \ge \! 0 \\
\implies  \\
x(t) - x'(t) \in \cK, \; y(t) - y'(t) \in \bR_+^p,	\ \forall t \!\ge \! 0 \ ,
\end{array}
\end{align}
for all $(x(\cdot), u(\cdot), y(\cdot)) \in \bar \Sigma$, $(x'(\cdot), u'(\cdot), y'(\cdot)) \in \bar \Sigma$.

For a closed system $\dot x = f(x)$, monotonicity simply means that
$x(0) - x'(0)  \in \cK  $ implies $ x(t) - x'(t) \in \cK$ for all $t \ge 0$,
along any state trajectory $x(\cdot )$. 
\red
\end{secdefn}

In what follows we will adopt the simpler terminology of $\mathcal{K}$\emph{-monotone systems}
for the systems in Definition \ref{def:positivity}, leaving the input and output partial orders implicit.

Monotonicity is strictly connected to positivity.
In fact monotonicity of a nonlinear system is equivalent to 
positivity of its variational system, as clarified in Theorem \ref{thm:mbp} below.
Consider the variational system of $\bar \Sigma$ along $(x(\cdot),u(\cdot),y(\cdot))$:
\begin{align}
d\bar \Sigma: 
\left\{\begin{array}{r@{}l}
\dot {\delta x} &{}= \partial f(x(t)) \delta x + B \delta u\\
\delta y &{}= C \delta x,
\end{array}\right.
\end{align}
where $\delta x \in \bR^n$, $\delta u \in \bR^m$, $\delta y \in \bR^p$, and $\partial f(x)$ denotes the Jacobian of $f$ computed at $x$. 
\begin{secthm}[monotonicity and positivity] \label{thm:mbp}
The open nonlinear system~$\bar \Sigma$ is $\cK$-\emph{monotone} 
if and only if its variational system $d \bar \Sigma$ satisfies
\begin{align}\label{def:NM}
\begin{array}{c}
\delta x(0)  \in \cK, \; \delta u(t)  \in  \bR_+^m,	\ \forall t \! \ge \! 0 \\
\implies \\
 \delta x(t) \in \cK, \; \delta y(t) \in \bR_+^p,	\ \forall t \!\ge \! 0
\end{array}
\end{align} 
for all $(x(\cdot), u(\cdot), y(\cdot)) \in \bar\Sigma$, $(\delta x(\cdot), \delta u(\cdot), \delta y(\cdot)) \in d\bar{\Sigma}$.

For a closed system $\dot x = f(x)$, monotonicity simply means that
$\delta x(0)  \in \cK$ implies $ \delta x(t) \in \cK$ for any $t \ge 0$
along any bounded state trajectory $x(\cdot )$.
\red
\end{secthm}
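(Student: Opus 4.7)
The plan is to prove both directions by exploiting the standard link between a nonlinear system and its variational system, using closedness and convexity of the cones throughout. The forward direction linearises a family of perturbed trajectories; the reverse direction integrates a family of variational trajectories.

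For the ``only if'' direction, fix a nominal trajectory $(x(\cdot),u(\cdot),y(\cdot))\in\bar\Sigma$, an initial variation $\delta x(0)\in\cK$ and a variational input $\delta u(\cdot)$ with values in $\bR_+^m$. For $\varepsilon>0$ consider the perturbed trajectory $x^\varepsilon(\cdot)$ with initial state $x(0)+\varepsilon\delta x(0)$ and input $u(\cdot)+\varepsilon\delta u(\cdot)$. Since $\cK$ and $\bR_+^m$ are cones, $\varepsilon\delta x(0)\in\cK$ and $\varepsilon\delta u(t)\in\bR_+^m$, so $\cK$-monotonicity of $\bar\Sigma$ yields $x^\varepsilon(t)-x(t)\in\cK$ and $y^\varepsilon(t)-y(t)\in\bR_+^p$ for all $t\ge 0$. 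Smooth dependence on the parameter (available because $f\in C^2$) gives
\begin{equation*}
\delta x(t)=\lim_{\varepsilon\to 0^+}\frac{x^\varepsilon(t)-x(t)}{\varepsilon},\qquad \delta y(t)=C\,\delta x(t).
\end{equation*}
Because $\cK$ is closed and invariant under positive scaling, taking the limit preserves membership, yielding $\delta x(t)\in\cK$ and $\delta y(t)\in\bR_+^p$.

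For the ``if'' direction, pick two trajectories $(x,u,y),(x',u',y')\in\bar\Sigma$ with $x(0)-x'(0)\in\cK$ and $u(t)-u'(t)\in\bR_+^m$, and parameterise a homotopy by $\sigma\in[0,1]$:
\begin{equation*}
x^\sigma(0)=x'(0)+\sigma(x(0)-x'(0)),\qquad u^\sigma(t)=u'(t)+\sigma(u(t)-u'(t)).
\end{equation*}
Again by $C^2$ regularity, $\sigma\mapsto x^\sigma(t)$ is $C^1$ and $\delta x^\sigma(t):=\partial x^\sigma(t)/\partial\sigma$ solves the variational equation along $x^\sigma$ with $\delta x^\sigma(0)=x(0)-x'(0)\in\cK$ and variational input $u(t)-u'(t)\in\bR_+^m$. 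The hypothesis \eqref{def:NM} then gives $\delta x^\sigma(t)\in\cK$ and $C\delta x^\sigma(t)\in\bR_+^p$ for every $\sigma\in[0,1]$ and $t\ge 0$. Writing
\begin{equation*}
x(t)-x'(t)=\int_0^1\delta x^\sigma(t)\,d\sigma,\qquad y(t)-y'(t)=\int_0^1 C\,\delta x^\sigma(t)\,d\sigma,
\end{equation*}
and using that a closed convex cone contains the Riemann integral of any continuous cone-valued function, we conclude $x(t)-x'(t)\in\cK$ and $y(t)-y'(t)\in\bR_+^p$, i.e.\ $\cK$-monotonicity. The closed-system statement is the same argument with $B=0$ and no output.

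The main obstacle is administrative rather than conceptual: one must invoke the $C^2$ dependence of solutions on initial conditions and parameters to justify that the limit (in the forward direction) and the partial derivative $\partial x^\sigma/\partial\sigma$ (in the reverse direction) exist and satisfy the stated variational equation. Once this regularity is in hand, the cone-theoretic ingredients--closedness for passing limits, positive homogeneity for rescaling, convexity plus closedness for the Riemann-integral step--do all the remaining work.
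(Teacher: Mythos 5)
Your proposal is correct and follows essentially the same route as the proof the paper relies on: the paper itself only cites \cite{Walcher:01} and \cite{FS:16}, and the argument in those references is exactly your two-step scheme (linearise a cone-perturbed family of trajectories and use closedness of $\cK$ for the ``only if'' direction; integrate the variational flow along the homotopy $\gamma(\sigma)=x'(0)+\sigma(x(0)-x'(0))$ and use closedness plus convexity for the ``if'' direction). The only point worth making explicit is the forward completeness of the intermediate trajectories $x^\sigma(\cdot)$, which your integral representation tacitly assumes and which is the reason the closed-system version of the statement is restricted to bounded trajectories.
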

\begin{proof}
The theorem corresponds to \cite[Proposition 1.5]{Walcher:01} and to \cite[Theorem~1]{FS:16}.
\end{proof}

Theorem \ref{thm:mbp} bridges monotonicity and positivity, opening the way
to the extension of the results of Section \ref{sec:ML} to nonlinear systems. The first
of these extensions provides conditions to verify monotonicity.
\begin{secthm}[monotonicity]\label{thm:NM}
An open nonlinear system~$\bar \Sigma$ is $\cK$-monotone 
if and only if 
there exist $P: \bR^n \to \bR^{p_{\cK} \times p_{\cK}}$, $G \in \bR^{p_{\cK} \times m}$, and $H \in \bR^{p \times p_{\cK}}$ such that, 
for all $x\in \bR^n$, 

\begin{subequations}\label{condf:NM}
\begin{align}
K \partial f(x) &= P(x) K,
\quad
\offd{P(x)} \ge 0
\label{condA:NM}\\
G&= K B,
\; \qquad
G \ge 0
\label{condB:NM}\\
C &= H K,
\;\;\;\;\quad
H \ge 0.
\label{condC:NM}
\end{align}
\end{subequations}
\end{secthm}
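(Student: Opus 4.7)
The plan is to reduce the statement to Theorem \ref{thm:LM} via Theorem \ref{thm:mbp}, and then apply the resulting linear characterization pointwise along trajectories. Theorem \ref{thm:mbp} tells us that $\cK$-monotonicity of $\bar\Sigma$ is equivalent to $\cK$-positivity of its variational system $d\bar\Sigma$, so the task reduces to characterizing the positivity of a linear time-varying system with state matrix $\partial f(x(t))$, constant input matrix $B$, and constant output matrix $C$, where $x(\cdot)$ ranges over all state trajectories of $\bar\Sigma$. Since $B$ and $C$ do not depend on $x$, only the state-matrix condition will carry an $x$-dependence through $P(x)$.

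For sufficiency, assume \eqref{condf:NM} and fix an arbitrary trajectory $(x(\cdot),u(\cdot),y(\cdot))$ of $\bar\Sigma$. Along this trajectory define the embedded variational coordinate $z:=K\delta x$. Differentiating and using \eqref{condA:NM}--\eqref{condC:NM} yields
\[
\dot z \;=\; K\partial f(x(t))\delta x + KB\,\delta u \;=\; P(x(t))\,z + G\,\delta u,
\qquad \delta y \;=\; C\delta x \;=\; H z.
\]
This is a linear time-varying system whose state matrix is Metzler ($\offd{P(x(t))}\ge 0$), with $G\ge 0$ and $H\ge 0$. Such a system is $\bR_+^{p_\cK}$-positive in the standard cooperative sense, so $z(0)\ge 0$ and $\delta u(\cdot)\ge 0$ imply $z(t)\ge 0$ and $\delta y(t)\ge 0$ for all $t\ge 0$. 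Hence $\delta x(t)\in\cK$ and $\delta y(t)\in\bR_+^p$, and Theorem \ref{thm:mbp} yields $\cK$-monotonicity of $\bar\Sigma$.

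For necessity, fix an arbitrary $x_0\in\bR^n$ and consider the trajectory of $\bar\Sigma$ starting at $x_0$. At $t=0$ the variational dynamics reduces to the linear system $\dot{\delta x}=\partial f(x_0)\delta x + B\,\delta u$, $\delta y = C\delta x$. A Nagumo/tangent-cone argument, using the polyhedral description $\cK=\{y:Ky\ge 0\}$, shows that the trajectory-level invariance \eqref{def:NM} forces the frozen linear system to be $\cK$-positive in the sense of Definition \ref{def:positivity}. Applying Theorem \ref{thm:LM} to this frozen system produces $P(x_0)$ with $KA\rvert_{A=\partial f(x_0)}=P(x_0)K$ and $\offd{P(x_0)}\ge 0$, as well as $G=KB\ge 0$ and $H\ge 0$ with $C=HK$. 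Since $B$ and $C$ are independent of $x_0$, the matrices $G$ and $H$ are constant; only $P$ depends on $x_0$. As $x_0$ was arbitrary, \eqref{condf:NM} holds for every $x\in\bR^n$.

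The main obstacle is the pointwise-to-trajectory reduction in the necessity direction: extracting from the trajectory-based invariance \eqref{def:NM} the frozen-in-time condition on $\partial f(x_0)$. Standard invariance arguments for closed convex sets under ODEs do the job — forward invariance of $\cK$ along every variational trajectory emanating from every boundary configuration $(\delta x(0),\delta u(0))$ at $t=0$ forces $\partial f(x_0)\delta x(0)+B\delta u(0)$ to lie in the tangent cone of $\cK$ at $\delta x(0)$ — but some care is needed to verify that, for polyhedral $\cK$, this infinitesimal condition is precisely the algebraic condition \eqref{condA:NM}--\eqref{condB:NM} captured by Theorem \ref{thm:LM}; the output condition \eqref{condC:NM} follows by a Farkas-style argument from $C\cK\subseteq\bR_+^p$.
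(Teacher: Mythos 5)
Your sufficiency direction coincides with the paper's: embed via $\delta z = K\delta x$, observe that \eqref{condf:NM} turns the variational system into a cooperative (Metzler) time-varying system, invoke standard orthant positivity, and conclude through Theorem~\ref{thm:mbp}. That half is fine.

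The necessity direction has a genuine gap. Your plan is to freeze the dynamics at an arbitrary $x_0$, argue by Nagumo that the frozen linear system $\dot{\delta x}=\partial f(x_0)\delta x + B\delta u$ is $\cK$-positive, and then ``apply Theorem~\ref{thm:LM}'' to extract $P(x_0)$, $G$, $H$. But in this paper Theorem~\ref{thm:LM} is itself proved as a special case of Theorem~\ref{thm:NM} --- it has no independent proof --- so as a proof of Theorem~\ref{thm:NM} your argument is circular. What is actually needed, and what the paper supplies, is the algebraic core that you defer: from the sub-tangentiality condition ``$\delta x\in\cK$ and $K_i^\top\delta x=0$ imply $K_i^\top\partial f(x)\delta x\ge 0$'' one must manufacture a row $P_i^\top(x)$ with $K_i^\top\partial f(x)=P_i^\top(x)K-c_i(x)K_i^\top$ and $P_i(x)\ge 0$, which yields $K\partial f(x)=P(x)K$ with $\offd{P(x)}\ge 0$. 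The paper does this by first \emph{shifting}: it chooses $c_i(x)>0$ large enough that $K_i^\top(\partial f(x)+c_i(x)I)\delta x\ge 0$ for \emph{all} $\delta x$ with $K\delta x\ge 0$, not merely on the face $K_i^\top\delta x=0$, and only then applies Farkas's lemma to obtain $K_i^\top(\partial f(x)+c_i(x)I)=P_i^\top(x)K$ with $P_i(x)\ge 0$; subtracting $\mathrm{diag}\{c_1(x),\dots,c_n(x)\}$ recovers \eqref{condA:NM}. You flag that ``some care is needed'' at exactly this point but never carry it out, and you invoke a Farkas-style argument only for the output condition \eqref{condC:NM}, which is the easy one. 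To repair the proof, either supply this shift-plus-Farkas argument directly (as the paper does), or give a self-contained proof of the linear Theorem~\ref{thm:LM} (e.g.\ via the classical cross-positivity characterization of cone invariance for linear systems) before invoking it.
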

\begin{proof}
The proof is in Appendix \ref{proof:NM}.
\end{proof}

The reader will notice that \eqref{condB:NM} and \eqref{condC:NM} correspond to \eqref{condB:LM} and \eqref{condC:LM},
and that \eqref{condA:NM} adapts \eqref{condA:LM} to the variational dynamics, replacing the state matrix $A$ with 
the Jacobian of $f$ computed at each point of the system state space.
For a closed system $\dot{x} = f(x)$, \eqref{condA:NM} is a necessary and sufficient condition for 
its monotonicity. Its dual condition can be found in \cite[Lemma 2]{KF:20}. 

Theorem \ref{thm:mbp} shows how the property of monotonicity for a nonlinear system
can be established by looking at its variational dynamics, as shown by Theorem \ref{thm:NM}.
A similar approach can be pursued to derive conditions for incremental stability  
of monotone systems, using contraction theory \cite{Angeli:02,FS:14}. Indeed, Theorem 
\ref{thm:IES} provides scalable conditions for incremental stability based on the variational system. 
As in the linear case, scalability follows from the fact that, 
for $\cK$-monotone systems, \eqref{condf:NM} leads to the embedding 
$\delta z = K \delta x$ in variational coordinates, which 
yields the positive variational system $d\bar \Sigma^P$
\begin{align}
d\bar \Sigma^P: 
\left\{\begin{array}{r@{}l}
\dot {\delta z} &{}= P(x(t)) \delta z + G \delta u\\
\delta y &{}= H \delta z \ .
\end{array}\right.
\end{align}

We briefly recall the notion of incremental stability \cite{Angeli:02}.
\begin{secdefn}\label{def:ies}
The closed system $\dot x = f(x)$, $x\in \bR^n$, is \emph{incrementally exponentially stable} 
if there exist $1 \leq k \in \bR$ and $0 < \lambda \in \bR$ such that 
\begin{align}
| x(t) - x'(t) |_1 \le k e^{-\lambda t} |x(0) - x'(0)|_1
\end{align}
for all trajectories $x(\cdot)$ and $x'(\cdot)$ and all $t \ge 0$.
\red
\end{secdefn}

In this paper, incremental exponential stability is defined by using the vector $1$-norm. However, this is not essential, and the property is equivalent to any vector $i$-norm, $i\in [1, \infty]$.

Below, we provide a condition for incremental exponential stability.

\begin{secthm}[incremental exponential stability]\label{thm:IES}
A closed $\cK$-monotone system $\dot x = f(x)$  is incrementally exponentially stable if
there exists $P(\cdot)$ satisfying \eqref{condA:NM}, $0 < \lambda \in \bR$, and $v \in \bR^{p_{\cK}}$ 
such that, for all $x \in \bR^n$, 
\begin{subequations}\label{cond:IES}
\begin{align}
v &> 0
\label{cond1:IES}\\
-v^\top P(x) &\ge \lambda v^\top.
\label{cond2:IES}
\end{align}
\end{subequations}
\end{secthm}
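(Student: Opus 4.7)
The plan is to prove the estimate in two stages: first establish a variational bound $|\delta x(t)|_1 \le k\, e^{-\lambda t}\, |\delta x(0)|_1$ holding uniformly along any reference trajectory $x(\cdot)$, and then upgrade this to the incremental bound by integrating along a smooth family of trajectories connecting $x(0)$ and $x'(0)$.

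For the variational bound, I would invoke the embedding $\delta z = K\delta x$ supplied by Theorem~\ref{thm:NM}, under which $\dot{\delta z} = P(x(t))\,\delta z$ with $P(x(t))$ Metzler (from $\offd{P(x)} \ge 0$). The key step is a comparison argument: let $\bar z(\cdot)$ solve the same linear time-varying equation with the non-negative datum $\bar z(0) = |\delta z(0)|$ (componentwise absolute value). Because $P(x(t))$ is Metzler, the auxiliary dynamics is a positive linear time-varying system, so $\bar z(t) \pm \delta z(t) \ge 0$ componentwise for all $t\ge 0$, giving $|\delta z(t)| \le \bar z(t)$. Using \eqref{cond1:IES}--\eqref{cond2:IES},
$\tfrac{d}{dt}\bigl(v^\top \bar z\bigr) = v^\top P(x(t))\,\bar z \le -\lambda\, v^\top \bar z$,
hence $v^\top |\delta z(t)| \le v^\top \bar z(t) \le e^{-\lambda t}\, v^\top |\delta z(0)|$.

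To return to $x$-coordinates, I would exploit that simplicity of $\cK$ makes $K$ injective and, combined with $v>0$, turns $\xi \mapsto v^\top |K\xi|$ into a norm on $\bR^n$ equivalent to $|\cdot|_1$: there exist $0<\alpha\le\beta$ with $\alpha|\xi|_1 \le v^\top |K\xi| \le \beta|\xi|_1$ for all $\xi$. This delivers $|\delta x(t)|_1 \le (\beta/\alpha)\, e^{-\lambda t}\, |\delta x(0)|_1$. Finally, parametrize the segment from $x(0)$ to $x'(0)$ by $\gamma_0(s) = x(0) + s(x'(0)-x(0))$, $s\in[0,1]$, and let $\gamma_t(s)$ be the flow at time $t$ starting from $\gamma_0(s)$; then $\partial_s \gamma_t(s)$ satisfies the variational equation along $\gamma_\cdot(s)$ and $|x(t)-x'(t)|_1 \le \int_0^1 |\partial_s\gamma_t(s)|_1\,ds$. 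Applying the variational estimate pointwise in $s$ and integrating over $[0,1]$ recovers Definition~\ref{def:ies} with $k=\beta/\alpha$ and the stated $\lambda$. The main obstacle is a clean treatment of the Lyapunov-like quantity $v^\top |\delta z|$ in the face of the non-smoothness of the absolute value at zero; the positive comparison with $\bar z$ circumvents Dini derivatives by exploiting precisely the Metzler structure of $P(x)$. Forward completeness of $\bar\Sigma$ is implicitly assumed so that $\gamma_t(s)$ is defined on all of $[0,\infty)\times[0,1]$.
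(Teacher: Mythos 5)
Your argument follows essentially the same route as the paper's proof: embed the variational dynamics via $\delta z = K\delta x$, use positivity of the Metzler LTV system $\dot{\delta z}=P(x(t))\delta z$ to control sign-indefinite variations, decay the linear functional $v^\top(\cdot)$ at rate $\lambda$, pull the estimate back through the injective $K$, and integrate along the segment joining the two initial conditions. Your comparison trajectory $\bar z$ with $\bar z(0)=|\delta z(0)|$ is just a repackaging of the paper's decomposition $\delta z = \delta z_1 - \delta z_2$ with $\delta z_i(0)\in\bR_+^{p_\cK}$ (take $\delta z_{1,2}=\tfrac12(\bar z\pm\delta z)$), and the norm-equivalence of $\xi\mapsto v^\top|K\xi|$ with $|\cdot|_1$ is a clean way to phrase the paper's ``full column rank gives $k\ge 1$'' step. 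All of this is correct.

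The one genuine gap is your final sentence: you \emph{assume} forward completeness of the flow, whereas the theorem does not hypothesize it ($f$ is merely $C^2$), so a priori trajectories could escape in finite time and $\gamma_t(s)$ need not be defined for all $t\ge 0$. The paper closes this gap using the estimate you have already established: $\delta x(t)=f(x(t))$ is itself a solution of the variational equation along $x(\cdot)$, so your bound gives $|f(x(t))|_1\le k e^{-\lambda t}|f(x(0))|_1$ on the maximal interval of existence; hence $\dot x$ is bounded there, $x(\cdot)$ cannot blow up in finite time, and solutions exist for all $t\ge 0$. Adding this observation (which costs nothing beyond what you have) turns your conditional estimate into the unconditional claim of Definition~\ref{def:ies} and completes the proof.
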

\begin{proof}
The proof is in Appendix \ref{proof:IES}.
\end{proof}

The novelty of Theorem \ref{thm:IES} is in the combination of the condition for monotonicity \eqref{condA:NM}
and of conditions for contraction \eqref{cond:IES}, which make use of the embedding in variational coordinates.
The reader will notice that the conditions of Theorem \ref{thm:IES} are a direct extension of the conditions of Theorem \ref{thm:ES}.
The incremental exponential stability property of Definition \ref{def:ies} also guarantees the existence of a globally exponentially 
stable equilibrium point $x_e$ for the system $\dot{x} = f(x)$. 
This follows from the Banach fixed-point theorem, since $(\bR^n,|\cdot|_1)$ is a complete metric space. 
All trajectories of $\dot{x} = f(x)$ are also bounded, since $| x(t) - x_e |_1 \le k |x(0) - x_e|_1$.
See also \cite[Theorem 2.2]{Kawano:22}.

When $\cK$ is proper, a weaker incremental stability condition than Theorem 4.5 can be derived as follows
\begin{align*}
v^\top K &> 0 \\
-v^\top P(x) K &\ge \lambda v^\top K.
\end{align*}
We later develop numerical algorithm for finding $v$, $P(x)$, and $K$ simultaneously.  If we use this condition, we need to handle multiplications of three unknown variables. In contrast, with Theorem 4.5, we only have to deal with multiplications of two unknown variables.

As another generalization, one can relax $\cK$-monotonicity. In fact, Theorem~\ref{thm:IES} yields the following corollary.
\begin{seccor}\label{cor:IES}
A closed system $\dot x = f(x)$ is incrementally exponentially stable if there exist $K \in \bR^{p_{\cK} \times n}$ with ${\rm rank} \; K = n$, $P: \bR^n \to \bR^{p_{\cK} \times p_{\cK}}$, $0 < \lambda \in \bR$, and $v \in \bR^{p_{\cK}}$ 
such that  for all $x \in \bR^n$, $K \partial f (x) = P(x) K$ and 
\begin{subequations}\label{cond:IES2}
\begin{align}
v &> 0\\
-v^\top |\offd{P}(x)| &\ge \lambda v^\top.
\end{align}
\end{subequations}
\end{seccor}
\begin{proof}
The proof is in Appendix \ref{proof:IES_cor}.
\end{proof}

\subsection{Monotonicity and (Differential) Dissipativity} 
Mirroring the linear case, we develop here a suitable notion of
dissipavitity that will later use for network analysis. 

\begin{secdefn}\label{def:EDD}
A $\cK$-monotone  open nonlinear system $\bar \Sigma$ is \emph{exponentially differentially dissipative}
with respect to the (differential) supply rate $s(\delta u, \delta y) = r^\top \delta u - q^\top \delta y$, $(r, q) \in \bR^m \times \bR^p$, 
if there exist $0< \lambda \in \bR$ and $0 < v \in \bR^{p_{\cK}}$ such that
\begin{align}\label{eq:EDD} 
e^{\lambda t_2} v^\top K \delta x(t_2)  \le \
& e^{\lambda t_1} v^\top K \delta x(t_1) \nonumber \\
&  + \int_{t_1}^{t_2} e^{\lambda t} (r^\top \delta u(t) - q^\top \delta y(t)) dt
\end{align}
for all $t_2 \geq t_1 \geq 0$ and all $(x(\cdot),u(\cdot),y(\cdot)) \in \bar\Sigma$, $(\delta x(\cdot),\delta u(\cdot),\delta y(\cdot))\in d\bar\Sigma$, 
that satisfy $(\delta x(t),\delta u(t),\delta y(t)) \in (\cK,  \bR_+^m, \bR_+^p)$ for $t \geq 0$.
\red
\end{secdefn}

\eqref{eq:EDD} adapts \eqref{eq:ED} to the variational system $d\bar\Sigma$. 
\eqref{eq:EDD} is equivalent to the condition: 
\begin{align}\label{eq2:EDD}
v^\top K \dot {\delta x} + \lambda v^\top K \delta x \le r^\top \delta u - q^\top \delta y \ ,
\end{align}
whose left-hand side is related to the incremental exponential stability condition \eqref{cond:IES}.

\eqref{eq:EDD} is also equivalent to the following \emph{incremental exponential dissipativity} condition, stated in terms of pairs of trajectories of $\bar\Sigma$:
\begin{align}
& e^{\lambda t_2} v^\top K (x(t_2) - x'(t_2)) 
\le \ e^{\lambda t_1} v^\top K (x(t_1) - x'(t_1)) \nonumber\\
& \quad + \int_{t_1}^{t_2} e^{\lambda t} (r^\top (u(t) - u'(t)) - q^\top (y(t) - y'(t))) dt
\end{align}
for all $t_2 \geq t_1 \geq 0$ and all $(x(\cdot), u(\cdot), y(\cdot)) \in \bar\Sigma$, $(x'(\cdot), u'(\cdot), y'(\cdot))\in \bar\Sigma$ such that
 $x(t) - x'(t) \in \cK$, $u(t) - u'(t) \in  \bR_+^m$, and $y(t) - y'(t)  \in  \bR_+^p$ for all $t \ge 0$.

Mimicking Theorem \ref{thm:ED}, we have the following dissipativity conditions.

\begin{secthm}[exponential differential dissipativity]\label{thm:EDD}
A $\cK$-monotone nonlinear system $\bar \Sigma$ is 
exponentially differentially dissipative with respect to 
a supply rate $r^\top \delta u - q^\top \delta y$ 
if there exist $P(\cdot)$, $G$, and $H$ satisfying \eqref{condf:NM},
and $0 < \lambda \in \bR$ and $v \in \bR^{p_{\cK}}$ such that
\begin{subequations}\label{cond:EDD}
\begin{align}
v &> 0
\label{cond1:EDD}\\
- v^\top P(\cdot) - q^\top H &\ge  \lambda v^\top, 
\quad
\forall x \in \bR^n
\label{cond2:EDD}\\
r^\top - v^\top G & \ge 0.
\label{cond3:EDD}
\end{align}
\end{subequations}
\end{secthm}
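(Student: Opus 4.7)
The plan is to adapt the embedding-based Lyapunov argument used in Theorem \ref{thm:ES} to the variational dynamics, mirroring what Theorem \ref{thm:IES} does for contraction and what Theorem \ref{thm:ED} does for dissipativity of linear systems. I would take the linear storage $S(\delta x) := v^\top K \delta x$, which under the embedding $\delta z = K \delta x$ coincides with $v^\top \delta z$. Conditions \eqref{condA:NM} and \eqref{condB:NM} ensure that $\delta z$ satisfies $d\bar\Sigma^P$, namely $\dot{\delta z} = P(x(t))\delta z + G \delta u$, while \eqref{condC:NM} gives $\delta y = H \delta z$. Under the hypotheses of Definition \ref{def:EDD}, $\delta x(t) \in \cK$ forces $\delta z(t) = K\delta x(t) \ge 0$, and $\delta u(t) \ge 0$, $\delta y(t) \ge 0$ throughout.

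The core step is the pointwise form \eqref{eq2:EDD} of the dissipation inequality. Differentiating $S$ along a variational solution yields
\[
\dot{S} + \lambda S \,=\, \bigl(v^\top P(x) + \lambda v^\top\bigr)\delta z \,+\, v^\top G\, \delta u.
\]
Rearranging \eqref{cond2:EDD} as $v^\top P(x) + \lambda v^\top \le -q^\top H$ and using $\delta z \ge 0$ bounds the first term by $-q^\top H \delta z = -q^\top \delta y$, while \eqref{cond3:EDD} together with $\delta u \ge 0$ bounds the second term by $r^\top \delta u$. This gives $\dot{S} + \lambda S \le r^\top \delta u - q^\top \delta y$. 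Multiplying through by the integrating factor $e^{\lambda t}$ turns the left-hand side into $\tfrac{d}{dt}(e^{\lambda t} S)$; integration from $t_1$ to $t_2$ then delivers \eqref{eq:EDD}.

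Since monotonicity is already packaged in the hypothesis via \eqref{condf:NM}, the argument is largely mechanical once the embedding is in place. The one point that deserves care is that the inequality chain above is valid only because Definition \ref{def:EDD} restricts to variational trajectories with $(\delta x, \delta u, \delta y) \in (\cK, \bR_+^m, \bR_+^p)$; this non-negativity of $\delta z$, $\delta u$, and $\delta y$ is exactly what converts the row-vector inequalities \eqref{cond2:EDD}--\eqref{cond3:EDD} into a scalar dissipation estimate. I do not foresee a genuine obstacle; specialising $r=0$ and $q=0$ reduces the argument to Theorem \ref{thm:IES}, and restricting $\partial f(x)$ to a constant matrix $A$ recovers Theorem \ref{thm:ED}.
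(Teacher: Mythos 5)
Your argument is correct and follows essentially the same route as the paper's proof: the paper likewise multiplies \eqref{cond2:EDD} on the right by $K\delta x \ge 0$ and \eqref{cond3:EDD} by $\delta u \ge 0$, sums the two inequalities to obtain the pointwise dissipation inequality \eqref{eq2:EDD}, and then integrates. Your explicit identification of the storage $v^\top K\delta x$ with $v^\top \delta z$ under the embedding is just a restatement of the same computation.
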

\begin{proof}
The proof is in Appendix \ref{proof:EDD}.
\end{proof}

We observe that for $q=0$, \eqref{cond1:EDD} and \eqref{cond2:EDD} correspond to \eqref{cond:IES}.
In fact, for $q \ge 0$, since $H \ge 0$, exponential differential dissipativity entails
incremental exponential stability (for closed sytems).

Theorem \ref{thm:EDD} allows for the analysis of interconnections of
open nonlinear systems that are monotone with respect to the partial order induced by triplets of simple polyhedral cones $(\cK_i, \bR_+^{m_i}, \bR_+^{p_i})$
\begin{align} \label{eq:nss}
\bar \Sigma_i: 
\left\{\begin{array}{r@{}l}
\dot x_i &{}= f_i (x_i )+ B_i u_i\\
y_i &{}= C_i x_i
\end{array}\right.
\quad
i = 1, \dots, N,
\end{align}
where~$x_i \in \bR^{n_i}$,~$u_i \in \bR^{m_i}$, and $y_i \in \bR^{p_i}$, and 
$f_i: \bR^{n_i} \to \bR^{n_i}$ is of class $C^2$. 
We also consider nonlinear interconnections of the form
\begin{align}\label{eq:n_int_ss}
u_i = W_i( y_1,\dots, y_N) , \quad i = 1,\dots,N,
\end{align}
where $W_i: \bR^{p_1} \times \cdots \times \bR^{p_N} \to \bR^{m_i}$ is of class $C^2$.

Theorem \ref{thm1:net} is generalized as follows.
\begin{secthm}[network incremental stability, out-degree]\label{thm3:net}
Consider $\cK_i$-monotone subsystems $\bar \Sigma_i$, for $i = 1,\dots,N$,
where $\cK_i$, $i=1,\dots,N$ are all simple cones.
Suppose that
\begin{itemize}
\renewcommand{\theenumi}{\Roman{enumi}}
\item[(i)] each $\Sigma_i$ satisfies Theorem \ref{thm:EDD} for $(v_i, r_i, q_i)$; \vspace{2mm}
\item[(ii)]  $\partial_{y_j} W_i  (\cdot ) \ge 0$ for all $i, j = 1,\dots,N$;
\item[(iii)]  $q_j^\top - \sum\nolimits\limits_{i=1}^N r_i^\top  \partial_{y_j} W_i  (\cdot) \ge 0$, for all $j = 1,\dots, N$.
\end{itemize}
Then, the network \eqref{eq:nss}, \eqref{eq:n_int_ss} is
$(\cK_1 \times \cdots \times \cK_N)$-monotone and incrementally exponentially stable. 

Furthermore, whenever (i) holds for $r_i^\top = v_i^\top G_i$, 
incrementally exponentially stability is guaranteed even if (ii) is replaced by 
\begin{itemize}
\item [(iia)] $\offd{P_i(\cdot) + G_i  \partial_{y_i} W_i  (\cdot) H_i}  \! \ge \!  0$, for all $i= 1,\dots, N$;
\item [(iib)] $G_i  \partial_{y_j} W_i (\cdot) H_j \ge 0$ for all $j$ such that $i \neq j$.
\end{itemize}
\end{secthm}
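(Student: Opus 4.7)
The plan is to view the network \eqref{eq:nss}--\eqref{eq:n_int_ss} as a single closed system with stacked state $x = (x_1,\ldots,x_N)$ and to apply Theorems~\ref{thm:NM} and~\ref{thm:IES} directly to it, using the block-diagonal data $K = \mathrm{diag}(K_1,\ldots,K_N)$ and $v = (v_1,\ldots,v_N)$ inherited from the subsystems. In this way both monotonicity with respect to $\cK_1\times\cdots\times\cK_N$ and incremental exponential stability reduce to verifying two structural properties of a single composite matrix $P(x)$ built from the subsystem $P_i(x_i)$'s and from the interconnection Jacobian.

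For monotonicity, I would first compute the closed-loop Jacobian. Its $(i,i)$-block is $\partial f_i(x_i) + B_i\,\partial_{y_i} W_i(y)\,C_i$ and its $(i,j)$-block for $j\neq i$ is $B_i\,\partial_{y_j}W_i(y)\,C_j$. Left-multiplying by $K$ and using $K_i\,\partial f_i = P_i K_i$, $K_i B_i = G_i$, and $C_j = H_j K_j$ from \eqref{condf:NM}, the identity $K\,\partial_x(\text{closed loop})(x) = P(x)\,K$ holds with $P_{ii}(x) = P_i(x_i) + G_i\,\partial_{y_i}W_i(y)\,H_i$ and $P_{ij}(x) = G_i\,\partial_{y_j}W_i(y)\,H_j$ for $j\neq i$. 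Off-diagonal nonnegativity of $P(x)$ then follows either from (ii), since $G_i$, $H_j$, and $\partial_{y_j}W_i$ are entrywise nonnegative while each $P_i$ already has nonnegative off-diagonal, or directly from (iia)--(iib). Theorem~\ref{thm:NM} therefore certifies that the network is $(\cK_1\times\cdots\times\cK_N)$-monotone.

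For incremental exponential stability I would check the decay condition $-v^\top P(x) \geq \lambda v^\top$ and invoke Theorem~\ref{thm:IES}. The $j$-th block-column sum of $v^\top P(x)$ is $v_j^\top P_j(x_j) + \sum_i v_i^\top G_i\,\partial_{y_j}W_i(y)\,H_j$. By \eqref{cond2:EDD} one has $-v_j^\top P_j(x_j) \geq \lambda v_j^\top + q_j^\top H_j$; by \eqref{cond3:EDD} together with (ii) and $H_j \geq 0$, the interconnection term is upper-bounded entrywise by $\sum_i r_i^\top\,\partial_{y_j}W_i(y)\,H_j$. Combining and using (iii) together with $H_j \geq 0$ yields $-\sum_i v_i^\top P_{ij}(x) \geq \lambda v_j^\top + (q_j^\top - \sum_i r_i^\top\,\partial_{y_j}W_i(y))H_j \geq \lambda v_j^\top$, which is exactly \eqref{cond2:IES} for the composite system. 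In the second statement the tightening $r_i^\top = v_i^\top G_i$ promotes the componentwise bound on $v_i^\top G_i$ to an identity, so the same chain of inequalities goes through even without any sign requirement on $\partial_{y_j}W_i$.

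The delicate point is the alignment between the subsystem supplies and the off-diagonal structure of $P(x)$: once the interconnection Jacobian is pulled through the embedding, the out-degree quantity $\sum_i r_i^\top\,\partial_{y_j}W_i$ appearing in (iii) must exactly match what shows up in the $j$-th column sum of $P(x)$, so that the excess over the per-subsystem decay $\lambda v_j^\top$ is nonnegative. Under (ii) this match is secured by componentwise nonnegativity of all factors, whereas under (iia)--(iib) it is made exact through $r_i^\top = v_i^\top G_i$; this is the algebraic reason why the sign hypothesis on $\partial_{y_j}W_i$ can be dropped in the second part of the statement.
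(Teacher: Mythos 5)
Your proposal is correct and follows essentially the same route as the paper: embed the network variational dynamics via the block-diagonal $K$, identify the composite $P(x)$ with blocks $P_i(x_i)+G_i\,\partial_{y_i}W_i\,H_i$ and $G_i\,\partial_{y_j}W_i\,H_j$, certify $\offd{P(\cdot)}\ge 0$ from (ii) or (iia)--(iib), and verify $-v^\top P(x)\ge \lambda v^\top$ by chaining \eqref{cond3:EDD}, (iii), and \eqref{cond2:EDD} before invoking Theorem~\ref{thm:IES}. Your closing remark on why $r_i^\top = v_i^\top G_i$ lets the sign hypothesis on $\partial_{y_j}W_i$ be dropped is exactly the paper's reasoning for the second part.
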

\begin{proof}
The proof is in Appendix \ref{proof3:net}.
\end{proof}

Theorem~\ref{thm3:net} can be utilized for modular control design of networks.
Given the interconnection rules $W_i  (y_1, \dots, y_N)$ satisfying Condition (ii), 
we first computes the pairs $(r_i, q_i)$, $i=1,\dots,N$ to satisfy Condition (iii).
Then, we develop controllers for each subsystem that guarantee Condition (i).
We will explore this approach in Section \ref{sec:Ex}.

Note that (iii) of Theorem \ref{thm3:net} is described in terms of the out-degree of each node, weighted by $r_j$.
Similar conditions can be derived using the weighted in-degree.
\begin{secthm}[network incremental stability, in-degree]\label{thm4:net}
Consider the $\cK_i$-monotone subsystems $\bar \Sigma_i$, $i = 1,\dots,N$,
where $\cK_i$, $i=1,\dots,N$ are all simple cones.
Suppose that
\begin{itemize}
\item[(i)] for each $\bar \Sigma_i$, $i = 1,\dots,N$, 
there exist $P_i(\cdot)$, $G_i$, and $H_i$ satisfying
 \eqref{condf:NM}, and $0 < \lambda_i \in \bR$, $w_i \in \bR^{p_{\cK_i}}$, $r_i \in \bR^{m_i}$, and $q_i \in \bR^{p_i}$ such that
\begin{subequations}
\begin{align}
w_i &> 0\\
- P_i(\cdot) w_i - G_i q_i &\ge \lambda_i w_i\\ 
r_i - H_i w_i &\ge 0 ;
\end{align}
\end{subequations}
\item[(ii)] $\partial_{y_j} W_i  (\cdot) \ge 0$ for all $i, j = 1,\dots,N$;
\item[(iii)] $q_i - \sum\nolimits\limits_{j=1}^N \partial_{y_j} W_i  (\cdot ) r_j \ge 0$, for all $i= 1,\dots, N$.
\end{itemize}
Then the network \eqref{eq:nss}, \eqref{eq:n_int_ss} is
$(\cK_1 \times \cdots \times \cK_N)$-monotone and incrementally exponentially stable.

Furthermore, whenever (i) holds for $r_i = H_iw_i$, 
incrementally exponentially stability is guaranteed even if (ii) is replaced by 
(iia) and (iib) of Theorem~\ref{thm3:net}.
\end{secthm}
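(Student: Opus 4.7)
The plan is to mirror the proof of Theorem~\ref{thm3:net} but to replace the weighted-sum linear Lyapunov function used there with a weighted max-type (polyhedral, $\ell_\infty$-like) Lyapunov function. The reason is that hypothesis (i) is a \emph{right}-eigenvector certificate: the inequality $-P_i(\cdot) w_i - G_i q_i \ge \lambda_i w_i$ acts on $w_i$ from the right, so a candidate of the form $v_i^\top K_i \delta x_i$ (as in Theorem~\ref{thm3:net}) would require a bound on $w_i^\top P_i$ that is not available. I therefore use
\begin{align*}
V(\delta z) := \max_{i,k}\, \frac{\delta z_{i,k}}{w_{i,k}}, \qquad \delta z_i := K_i \delta x_i.
\end{align*}

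First I would establish network monotonicity. By Theorem~\ref{thm:NM} the variational equation of each subsystem in $\delta z_i$-coordinates is $\dot{\delta z}_i = P_i(x(t))\,\delta z_i + G_i \delta u_i$, $\delta y_i = H_i \delta z_i$, with $P_i(\cdot)$ Metzler and $G_i,H_i \ge 0$; condition (ii) makes the coupling $\delta u_i = \sum_j \partial_{y_j} W_i(y(t))\,\delta y_j$ non-negative, so the stacked embedded variational system is positive on $\bR_+^{\sum_i p_{\cK_i}}$, yielding forward invariance of $\cK_1\times\cdots\times\cK_N$ via the embeddings $K_i$ and hence monotonicity by Theorem~\ref{thm:mbp}. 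Next I would bound the upper-right Dini derivative of $V$ along non-negative embedded variational trajectories. At a time where $(i^*,k^*)$ attains the maximum, $\delta z_{i^*,k^*} = V\,w_{i^*,k^*}$ and $\delta z_{j,l} \le V\,w_{j,l}$ for all $(j,l)$; the Metzler structure of $P_{i^*}$ and non-negativity of $G_{i^*}, H_j, \partial_{y_j} W_{i^*}$ give the componentwise majorization
\begin{align*}
\frac{d}{dt}\delta z_{i^*,k^*} \le V\Bigl[P_{i^*}(x)\, w_{i^*} + G_{i^*}\textstyle\sum_j \partial_{y_j} W_{i^*}\, H_j w_j\Bigr]_{k^*}.
\end{align*}
The bound $H_j w_j \le r_j$ from (i), condition (iii), and $G_{i^*}\ge 0$ collapse the bracket to $P_{i^*} w_{i^*} + G_{i^*} q_{i^*}$, which by (i) is $\le -\lambda_{i^*} w_{i^*}$; taking the $k^*$-th entry and dividing by $w_{i^*,k^*}$ gives $D^+ V \le -\lambda_{\min} V$ with $\lambda_{\min}:=\min_i \lambda_i>0$, hence exponential decay of $V$.

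For the addendum I would absorb the interconnection into each subsystem and write the closed loop as $\dot{\delta z}_i = \tilde P_i\,\delta z_i + \sum_{j\ne i} \tilde A_{ij}\,\delta z_j$ with $\tilde P_i := P_i + G_i \partial_{y_i} W_i\, H_i$ and $\tilde A_{ij} := G_i \partial_{y_j} W_i\, H_j$. Hypotheses (iia)/(iib) make the stacked matrix Metzler, preserving network monotonicity. Repeating the max-type Dini computation, the key quantity $\tilde P_{i^*} w_{i^*} + \sum_{j\ne i^*}\tilde A_{i^*,j} w_j$ equals $P_{i^*} w_{i^*} + G_{i^*}\sum_j \partial_{y_j} W_{i^*}\, H_j w_j$; with the tight relation $r_j = H_j w_j$, hypotheses (iii) and (i) make this $\le -\lambda_{i^*} w_{i^*}$, so $D^+V \le -\lambda_{\min} V$ again.

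The main obstacle is two-fold. First, $V$ is non-smooth, so I must justify the Dini bound at points where several indices attain the maximum: the standard polyhedral-Lyapunov argument bounds $D^+V(t)$ by the maximum over active $(i^*,k^*)$ of the individual componentwise bounds, all of which satisfy the same $-\lambda_{\min} V$ estimate. Second, to pass from variational decay of $V$ on non-negative embedded trajectories to the incremental bound of Definition~\ref{def:ies}, I would follow the approach of Theorem~\ref{thm:IES}: write the difference between two trajectories via the mean-value form of $f$ to get a linear time-varying equation whose coefficient matrix is an average Jacobian and still satisfies (i), apply the contraction estimate to the (cone-valued) differences, and exploit the norm equivalence between $V$ and $|\delta x|_1$ arising from $w_i > 0$ and the full column rank of the $K_i$.
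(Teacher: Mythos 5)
Your proposal is correct and follows essentially the same route as the paper: the paper likewise reuses the monotonicity argument of Theorem~\ref{thm3:net}, establishes the stacked right-eigenvector inequality $-P(x)w \ge \lambda w$ with $\lambda = \min_i \lambda_i$ from (i) and (iii) (and from $r_i = H_i w_i$ under (iia)/(iib)), and then proves decay via the upper right Dini derivative of the weighted max-type Lyapunov function $V(\delta z) = \max_i\{\delta z_i / w_i\}$, concluding by norm equivalence and the line-segment argument of Theorem~\ref{thm:IES}. Your identification of why a max-type (rather than weighted-sum) function is forced by the right-acting certificate is exactly the point of the paper's proof.
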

\begin{proof}
The proof is in Appendix \ref{proof4:net}.
\end{proof}

Similarly to Corollary~\ref{cor:IES}, Theorems~\ref{thm3:net} and~\ref{thm4:net} can be generalized by relaxing the requirement $\offd{P}(\cdot) \ge 0$.

\section{Example: Nonlinear Mass-Spring System}\label{sec:Ex}
\label{sec:ex}
We develop the analysis of a nonlinear mass-spring system using 
the theoretical results of the paper. The system is represented
by the following equations
\begin{align}\label{sys:nmech}
\left\{\begin{array}{r@{}l}
\begin{bmatrix}
\dot x_1 \\ \dot x_2
\end{bmatrix}
&{}=
\begin{bmatrix}
x_2 \\
- k ( x_1)
\end{bmatrix}
+
\begin{bmatrix}
0 \\ 1
\end{bmatrix} u \\[2.5mm]
y &{}= \begin{bmatrix}
1 & 0
\end{bmatrix} x,
\end{array}\right.
\end{align}
where~$\underline k \leq \partial k(x_1) \leq \overline k$ for all $x_1$ and $x = \begin{bmatrix}x_1 \!&\! x_2\end{bmatrix}^\top$.
 
We again use the cone $\cK$ defined by $K_1$ and $K_2$ in \eqref{cone:mech}
and a linear state-feedback controller given by 
\begin{equation}
u = F_1 x_1 + F_2 x_2.
\label{sys:nfeedback}
\end{equation}

From Theorem \ref{thm:NM}, \eqref{condA:NM},
the closed-loop system \eqref{sys:nmech},\eqref{sys:nfeedback} is $\cK$-monotone
 if and only if there exists $P:\bR^2 \to \bR^{2 \times 2}$ such that
\begin{align*}
&\begin{bmatrix}
1 & 0 \\
2 & 1
\end{bmatrix}
\begin{bmatrix}
0 & 1 \\
F_1 - \partial k (x_1) & F_2
\end{bmatrix}
=
\underbrace{\begin{bmatrix}
P_{11}(x) & P_{12}(x) \\
P_{21}(x) & P_{22}(x)
\end{bmatrix}}_{P(x)}
\begin{bmatrix}
1 & 0 \\
2 & 1
\end{bmatrix}
\end{align*} 
with $ P_{12}(x) \ge 0,  P_{21}(x) \geq 0$, for all $x\in\bR^2$.
This leads to the condition
\begin{align}\label{cond2:mech_cp}
-4 -  \partial k(x_1) + F_1 - 2F_2 \ge -4 -  \overline k + F_1 - 2F_2 \ge 0
\end{align}
since
\begin{align*}
P(x) 
=
\begin{bmatrix}
-2 & 1\\
-4 - \partial k(x_1) + F_1 - 2 F_2 & 2 + F_2
\end{bmatrix}.
\end{align*}
Also, from \eqref{condB:NM} and \eqref{condC:NM} we get 
\begin{align*}
G := \begin{bmatrix}
0 \\ 1
\end{bmatrix},
\qquad
H := \begin{bmatrix}
1 & 0
\end{bmatrix}.
\end{align*}

We now consider Theorem \ref{thm:EDD}, and specifically \eqref{cond:EDD},
to show that the system is exponentially differentially dissipative. 
For some $\lambda > 0$, we need to find $v = \begin{bmatrix} v_1  \!&\! v_2\end{bmatrix}^{\!\top}$ such that
\begin{subequations} \label{cond:EDD_ex0}
\begin{align}
v &> 0\\
-q + 2 v_1 + v_2 ( 4 + \partial k(x_1) - F_1 + 2 F_2) &> 0\\
-v_1 - v_2 (2 + F_2) &> 0\\
r &\ge v_2.
\end{align}
\end{subequations}
Note that $-q + 2 v_1 + v_2 ( 4 + \partial k(x_1) - F_1 + 2 F_2) > 0$ can be replaced by the lower bound 
$-q + 2 v_1 + v_2 ( 4 + \underline{k} - F_1 + 2 F_2)>0$. Furthermore, 
$v_2$ can be chosen as $1$ without loss of generality.
This can be confirmed by dividing each inequality by $v_2$ and 
introducing new variables $\bar v_1:=v_1/v_2$, $\bar q:=q /v_2$, and $\bar r:=r/v_2$. 
For~$v_2=1$, we chose the smallest~$r$ satisfying the condition, namely $r=1$. 

Combining  \eqref{cond2:mech_cp}, \eqref{cond:EDD_ex0}, and the
observations above, we get
\begin{subequations}\label{cond:EDD_ex}
\begin{align}
v_1 &> 0
\label{cond1:EDD_ex}\\
- q + 2 v_1 + 4 + \underline k > F_1 - 2 F_2 & > 4  +  \overline k
\label{cond2:EDD_ex}\\
 - F_2 &> v_1 + 2.
\label{cond3:EDD_ex}
\end{align}
\end{subequations}
\eqref{cond:EDD_ex} is feasible for arbitrary $q$, $\underline k$, and $\overline k$.
To see this, note that there exists~$v_1 > 0$ such that
$
- q + 2 v_1 + 4 + \underline k >  4  +  \overline k
$.
Then one can set $F_2$ to satisfy \eqref{cond3:EDD_ex}. $F_1$ can be finally used to satisfy \eqref{cond2:EDD_ex}. 
Any solution to \eqref{cond:EDD_ex} guarantees that the closed-loop system \eqref{sys:nmech},\eqref{sys:nfeedback} 
is exponentially differentially dissipative, thus 
$\cK$-monotone and incrementally exponentially stable for $q\geq0$.

For simplicity, let us consider the case $q = 1$, $\underline k=-1$, and $\overline k=1$.
This corresponds to a nonlinear spring with attractive and repulsive action, like in simple models of buckling.
In such a case, $F_1=-6$, $F_2 = -6$, and $v_1 = 3$ satisfy \eqref{cond:EDD_ex}.
Fig.~\ref{fig:mech} shows the phase portrait of the closed-loop system \eqref{sys:nmech},\eqref{sys:nfeedback} 
for the nonlinear spring $k(x_1) = (\sin(x_1) - x_1)/2$, which satisfies  $-1 \leq \partial k(x_1) \leq 1$ for all $x_1$.
The origin is a globally exponentially stable equilibrium point.

\begin{figure}[tb]
\begin{center}
\includegraphics[width=0.68\columnwidth]{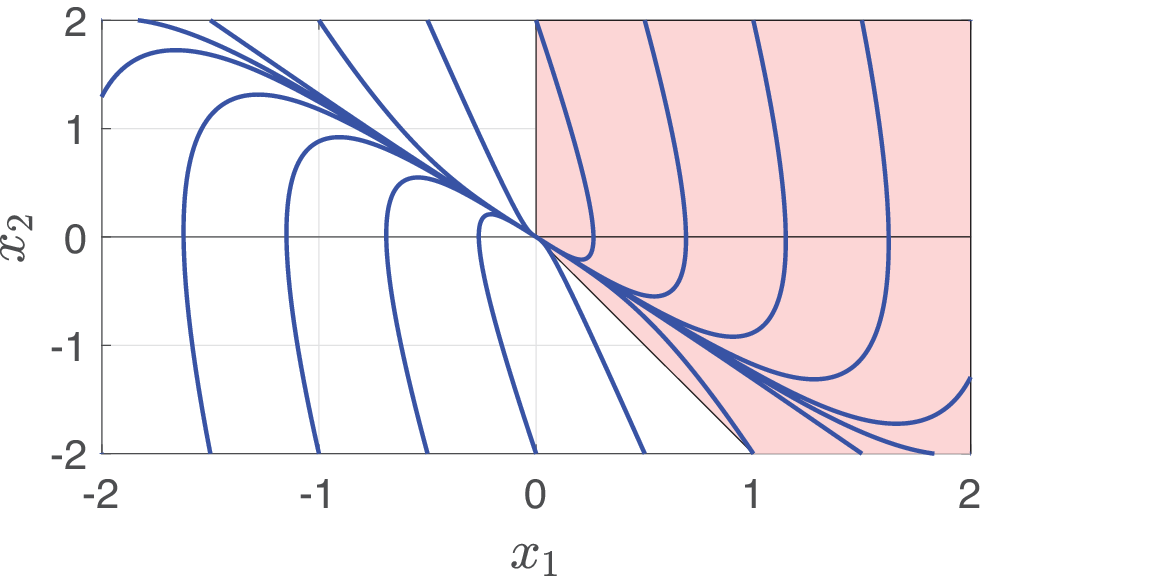} 
\caption{Phase portrait of the closed-loop subsystem {\eqref{sys:nmech} with $u = \begin{bmatrix}-6 \!&\! -6\end{bmatrix}x$}. The red region represents the cone $\cK$} 
\label{fig:mech}
\end{center}
\end{figure}

We now consider a network of the form \eqref{eq:nss},\eqref{eq:n_int_ss}, where each subsystem is a 
controlled nonlinear mass-spring system \eqref{sys:nmech},\eqref{sys:nfeedback}. From Theorem~\ref{thm3:net}, any solution to \eqref{cond:EDD_ex} guarantee that 
a network with arbitrary structure and arbitrary size $N$ is incrementally
exponentially stable if
\begin{subequations} \label{eq:net_example}
\begin{align}
\partial_{y_i} W_j ( \cdot ) \ge 0, \quad& \forall i,j = 1,\dots,N \label{eq:net_example_int}\\
\displaystyle q \ge \sum_{i = 1}^N \partial_{y_i} W_j  ( \cdot ), \quad& \forall j=1,\dots,N. \label{eq:net_example_strength}
\end{align}
\end{subequations}

We consider a network of heterogeneous subsystems. Specifically, the nonlinear spring for the $i$-th system satisfies
$k_i(x_{i,1}) = a_i( \sin(x_{i,1}) - x_{i,1})$ for randomly generated $a_i \in [0, 1/2]$.
This guarantee that $-1 \leq \partial k_i(x_{i,1}) \leq 1$ for all $x_{i,1}$. 
We choose $N=100$ and consider constant interconnection weights, i.e., 
$u_i = \sum_{j=1}^N W_{i,j} x_j$, for randomly generated $0 \leq W_{i,j} \leq 1/N$.
This range guarantees that the network interconnections satisfy \eqref{eq:net_example}.
Thus, from Theorem~\ref{thm3:net}, the network is $\cK$-monotone and incrementally exponentially stable. 

Figure~\ref{fig:net} shows the convergence of the aggregate output of the network $y$
from generic initial positions $|y(0)|_{\infty} \leq 1$ and initial velocities $|\dot y(0)|_{\infty} \leq 1$, where $| \cdot |_\infty$ denotes the vector $\infty$-norm.
For different inputs, independently on the initial condition, the each network component converges to steady-state.

\begin{figure}[tb]
\begin{center}
\includegraphics[width=0.8\columnwidth]{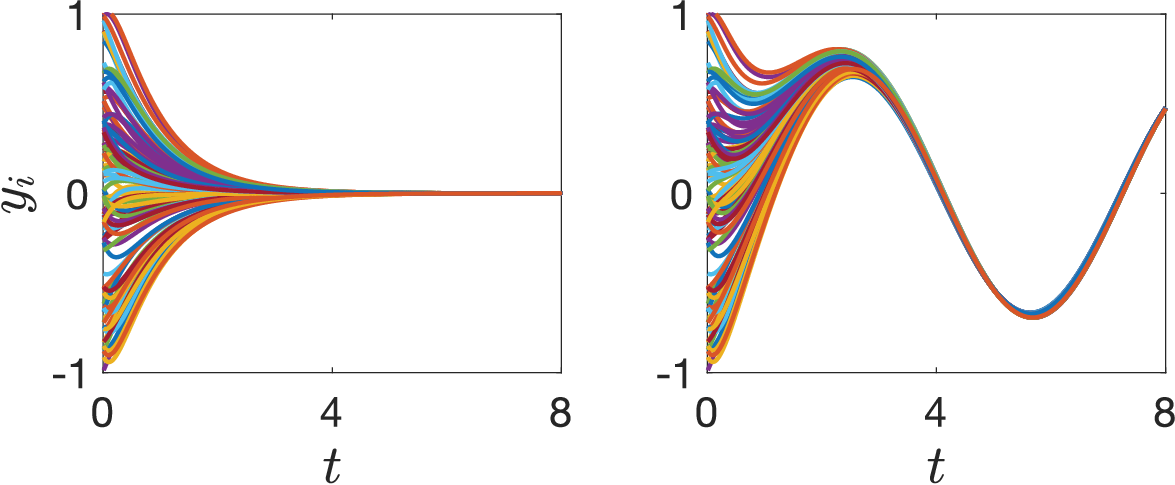} 
\caption{Closed-loop trajectories of the network interconnection of  \eqref{sys:nmech} with $u_i = \begin{bmatrix}-6 \!&\! -6\end{bmatrix} x_i +  \sum_{j=1}^N W_{i,j} y_j + \hat u_i$, for randomly generated $0 \leq W_{i,j} \leq 1/N$, where $N=100$:
$\hat u_i = 0$ (left) and $\hat u_i =  5\sin (t)$ (right), for $i = 1,\dots,N$. } 
\label{fig:net}
\end{center}
\end{figure}

\section{Co-Design of Cone and Controller}\label{sec:AC}

The example in Section \ref{sec:ex} illustrates the potential 
of the theory for analysis and synthesis of feedback systems and networks. 
For a given simple polyhedral cone $\cK$ the conditions
of our theorems lead to linear programs (LP), which can be easily solved. 
However, Section \ref{sec:ME} shows how the
selection of the ``wrong'' cone may lead to infeasibility even if 
stability and monotonicity are both achievable in closed loop. In what follows
we start developing a numerical solution to this problem. We propose an
algorithm that co-design the cone $\cK$ and the feedback controller $u= Fx + \bar u$, with the goal of enforcing $\cK$-monotonicity together with incremental exponential stability or exponential differential dissipativity.
The algorithm is inspired to the earlier attempt \cite{KF:20}.

\subsection{Co-Design Monotonicity and Incremental Stability}
We consider the co-design of a cone $\cK$, vector $v$, and state feedback $F$ to satisfy Theorem \ref{thm:IES}.
Recall that $\cK$ is a simple cone if and only if ${\rm rank} \; K=n$.

\begin{secprob}\label{prob:IES}
For the closed-loop system 
\begin{align}\label{sys:cl}
\dot x = f(x) + B F x \qquad x\in \bR^n
\end{align}
find a matrix $K$ with ${\rm rank} \; K =n$ and a state feedback $F$ such that the closed-loop system
is $\cK$-monotone and incrementally exponentially stable. 
\red
\end{secprob}

To make the problem tractable, we assume that there exists 
a finite set of matrices $\cA := \{ A_1, \dots, A_L \}$ such that
\begin{equation} 
\label{eq:cvx}
\partial f(x) \in \mbox{Convex-hull} (\cA), 
\quad 
\forall  x \in \bR^n,
\end{equation}
that is, for any $x$, there exist weights $\theta_i(x) \geq 0$, $i = 1,\dots,L$ such that 
\begin{equation}
\label{eq:cvx2}
\partial f(x) = \sum_{i=1}^L \theta_i(x) A_i \ ,\qquad \sum_{i=1}^L \theta_i(x) = 1. 
\end{equation}

The following proposition is instrumental.
\begin{secprop}\label{prop1:ES_alg}
Take $\varepsilon_1 > 0$ and 
suppose that $f(x)$ admits the convex relaxation \eqref{eq:cvx}.
Given $K$ with ${\rm rank} \; K = n$ and $v > 0$, 
the following LP problem is feasible:
\begin{subequations} \label{opt:c}
\begin{align}
&\max_{F, P_1, \dots, P_L,  c_p, c_s} c \nonumber \\
\mbox{s.t. \; }
&K (A_i + B F) = P_i K \label{opt:cA} \\
&\offd{P_i} \ge c_p \label{opt:mono} \\
&- v^\top P_i \ge c_s  \label{opt:IES}\\
&c_p \ge c, \; c_s \ge c + \varepsilon_1. 
\end{align}
\end{subequations}
where $i = 1,\dots,L$. Then, if $c \ge 0$, the closed-loop system \eqref{sys:cl} is 
$\cK$-monotone and incrementally exponentially stable.
\end{secprop}
\begin{proof}
The proof is in Appendix \ref{proof:ES_alg}.
\end{proof}

If $c \ge 0$, the LP problem \eqref{opt:c} delivers a state feedback controller that 
guarantees $\cK$-monotonicity and incremental exponential stability:
$K$, $v$, and $P(x) = \sum_{i=1}^L \theta_i(x) P_i$ satisfy
the conditions of Theorem \ref{thm:IES}. 
Otherwise, negative $c_p$ and $c_s$ provide a measure of the gap 
to $\cK$-monotonicity and to incremental exponential stability.
This information can be used to search for a new cone $\cK$ and
a new vector $v > 0$, with
the goal of reducing such gap. 

In what follows, we work through a variational approach. 
The algorithm produces a sequence of small updates on 
$K$ and $v$ with the goal of making $c$ increase.
Such small updates must be compatible with the constraints \eqref{opt:c}.
For example, for small variations, \eqref{opt:cA} leads to
\begin{align*}
(K + \delta K)  (A_i + B (F + \delta F))
 = (P_i + \delta P_i) (K + \delta K)
\end{align*}
and ignoring high-order terms we get
\begin{align*}
 \delta K(A_i + BF) + K B \delta F =  P_i \delta K + \delta P_i K \ .
\end{align*}
Given $K$, $v$, $F$, $P_i$ from  \eqref{opt:c},
we can search for small perturbations  $\delta K$, $\delta F$, and $\delta P_i$, 
with the goal of increasing $c$.
This approach leads to the following optimization problem for updating $K$ and $v$:
\begin{subequations} \label{opt2:c}
\begin{align}
&\max_{\delta K, \delta F, \delta v, \delta P_1, \dots, \delta P_L, c_p, c_s} c \nonumber \\
\mbox{s.t. \; }
& \delta K\! (A_i \!+ \!B F ) + K \!B \delta F \!=\!   \delta P_i K\!+\! P_i \delta K\\
&{\rm rank} \; (K + \delta K) = n \\
&\offd{P_i + \delta P_i} \ge c_p \label{opt2:mono} \\
&v^\top +  \delta v^\top \ge \varepsilon_1 \\
&- v^\top P_i - \delta v^\top P_i - v^\top \delta P_i \ge c_s \label{opt2:IES} \\
&c_p \ge c, \; c_s \ge c + \varepsilon_1. 
\end{align}
\end{subequations}
where $i = 1,\dots,L$ and with the additional constraints that $(\delta K, \delta F, \delta v, \delta P_1, \dots,\delta P_L)$ must remain small.

To constrain ``$(\delta K, \delta F, \delta v, \delta P_1, \dots,\delta P_L)$ remain small'' is short notation for a
set of linear constraints of the form $-\varepsilon_2 \leq \delta K \leq \varepsilon_2$ for some small $\varepsilon_2>0$ (and similar for the other variables).
Likewise, the constraint ``${\rm rank} \; (K + \delta K) =n$''  is not a linear constraint, but 
can be enforced by taking $\delta K$ sufficiently small or by using the relaxation
$K + \delta K = (I + Q) K$ for some $Q \in \bR^{p_{\cK} \times p_{\cK}}$.
In fact, if $I + Q$ is non-singular then ${\rm rank} \; (K + \delta K) = {\rm rank} \; K = n$.
For instance, non-singularity is guaranteed under linear constraints $ - 1/(p_\cK +1) \le Q_{i,j} \le 1/(p_\cK +1)$ for all $i,j =1,\dots,p_\cK$.

Using the obtained~$\delta K$ and $\delta v$, we update $K \leftarrow K + \delta K$ and $v \leftarrow v + \delta v$. 
The proposed algorithm is summarized below.
According to the discussions above, \eqref{opt:c} and \eqref{opt2:c} 
are always feasible but they may produce marginal improvements on $c$.
This can be used to terminate the algorithm (fail). 

\begin{algorithm}
\caption{Solution to Problem~\ref{prob:IES}}
\label{alg:IES} 
\begin{algorithmic}[1]
\small
\Require $\cA := \{ A_1, \dots, A_L\}$, $B$, initial $K$ with ${\rm rank} \; K =n$, initial $v > 0$,
$\varepsilon_1>0$, $\varepsilon_2>0$, $\varepsilon_3>0$
\Ensure Solutions $K$ and $F$ to Problem~\ref{prob:IES} or {\bf Fail}
\While {{\bf True}}
\State {Solve \eqref{opt:c}}
\If {$c \ge  0$}
\State {\bf Return $K$ and~$F$}
\EndIf
\State {Solve \eqref{opt2:c}}
\State {$K \leftarrow K + \delta K$} and {$v \leftarrow v + \delta v$}
\If {$c$ improves less than $\varepsilon_3$}
\State {\bf Return Fail}
\EndIf
\EndWhile
\end{algorithmic}
\end{algorithm}

\begin{secrem} \label{rem:fail_and_complexity}
The complexity of the polyhedral cone, that is, the number of rows of $K$ is
decided at the beginning of the program, through the initial matrix $K$.
The algorithm does not increase the number of rows of $K$ throughout its execution.
A ``fail'' may indicate that the number of rows of $K$ needs to be increased to
reach feasibility. However, if the number of rows is too large, the obtained $K$ can be redundant. Namely, to generate the same cone $\cK$, some $K_j$ can be removed.
\red
\end{secrem}

\begin{secrem}
According to Corollary~\ref{cor:IES}, Algorithm 1 can be extended by replacing the set of~\eqref{opt:mono} and~\eqref{opt:IES}  with $-v^\top |\offd{P_i}| \ge c_s$ and the set of~\eqref{opt2:mono}  and~\eqref{opt2:IES} with 
\begin{align*}
&-v^\top |\offd{P_i}| - \delta v^\top |\offd{P_i}| \\
&\qquad - v^\top |\offd{\delta P_i}| \ge c_s.
\end{align*}
As a solution, this new algorithm gives a feedback controller achieving incremental exponential stability.
\red
\end{secrem}

\begin{secrem}\label{rem:OF}
Algorithm \ref{alg:IES} can be used to design dynamic output feedback controllers
\begin{align*}
\left\{\begin{array}{r@{}l}
\dot x_c &{}= A_c x_c + B_c y\\
u &{}= C_c x_c + D_c y.
\end{array}\right.
\end{align*}
This can be achieved by replacing in \eqref{opt:c} and \eqref{opt2:c} each $A_i + B F$ by
\begin{align*}
\begin{bmatrix}
A_i + B D_c C & B C_c\\
B_c C & A_c
\end{bmatrix},
\end{align*}
for all $i=1,\dots,L$.
\red
\end{secrem}

\subsection{Co-Design Monotonicity and (Differential) Dissipativity}
The next step is to develop a co-design strategy to satisfy the conditions of Theorem \ref{thm:EDD}.
\begin{secprob}\label{prob:EDD}
Given the supply rate $r^\top \delta u - q^\top \delta y$,
find a matrix $K$ with ${\rm rank} \; K = n$ and a state feedback $F$ such that the closed-loop system
\begin{align}
\bar \Sigma_C: 
\left\{\begin{array}{r@{}l}
\dot x &{}= f(x) + B F x + B u,\\
y &{}= C x
\end{array}\right.
\end{align}
is $\cK$-monotone and exponentially differentially dissipative. 
\red
\end{secprob}

\begin{secprop}
Let $\varepsilon_1 > 0$, and 
suppose that $f(x)$ admits the convex relaxation \eqref{eq:cvx}.
Given $(r, q)$, $K$ with ${\rm rank} \; K = n$, and $v > 0$, 
the following LP problem is feasible: for all $i=1,\dots,L$,
\begin{subequations}
\label{opt3:c}
\begin{align}
&\max_{F, P_1, \dots, P_L, H, c_p, c_s} c \nonumber \\
\mbox{s.t. \; }
&K (A_i + B F) = P_i K  \\
&\offd{P_i} \ge c_p \\
&K B \ge c_p \\
&C = H K \\
&H \ge c_p \\
&- q^\top H - v^\top P_i \ge  c_s \\
&r^\top - v^\top K B  \ge c_p\\
&c_p \ge c, \; c_s \ge c + \varepsilon_1. 
\end{align}
\end{subequations}
Then, if $c \ge 0$, the closed-loop system \eqref{sys:cl} is 
$\cK$-monotone and exponentially differentially dissipative.
\end{secprop}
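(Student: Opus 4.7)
My plan is to mirror the proof of Proposition \ref{prop1:ES_alg}: first establish unconditional feasibility of the LP by exhibiting a trivial feasible point, then use the assumption $c \ge 0$ together with the convex relaxation \eqref{eq:cvx2} to recover the hypotheses of Theorem \ref{thm:EDD} for the closed-loop Jacobian $\partial f(x) + BF$.

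For feasibility, I would pick $F = 0$, and use the fact that $K$ has full column rank. This guarantees the existence of a left inverse $K^+$ with $K^+ K = I_n$, and also ensures that the row space of $K$ equals $\bR^n$, so I can always set $H := C K^+$ to satisfy $C = HK$, and $P_i := K A_i K^+ + R_i$ where $R_i$ is chosen in the left-annihilator complement so that $P_i K = K A_i$. The remaining inequalities involve $\offd{P_i}$, $KB$, $H$, $-q^\top H - v^\top P_i$, and $r^\top - v^\top K B$: each is a fixed real quantity once the data is fixed, so any sufficiently negative choice of $c_p$, $c_s$, and $c$ satisfies every constraint. This renders the LP feasible regardless of problem data.

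For the main direction, assume the optimum yields $c \ge 0$, so automatically $c_p \ge 0$ and $c_s \ge \varepsilon_1 > 0$. Define $P(x) := \sum_{i=1}^L \theta_i(x) P_i$ using the weights from \eqref{eq:cvx2}. Convex combinations of the $P_i$'s inherit the sign structure: $\offd{P(x)} \ge 0$ and $-v^\top P(x) - q^\top H \ge c_s \1^\top$. Since $v > 0$, I can pick $\lambda := c_s / \max_j v_j > 0$ so that $c_s \1^\top \ge \lambda v^\top$, giving \eqref{cond2:EDD}. The identity constraint $K(A_i + BF) = P_i K$ must be combined across $i$: multiplying by $\theta_i(x)$ and summing gives $K(\partial f(x) + BF) = P(x) K$, which is exactly \eqref{condA:NM} applied to the closed-loop drift $f_{cl}(x) := f(x) + BFx$. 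Setting $G := KB$, the remaining constraints $G \ge 0$, $C = HK$ with $H \ge 0$, and $r^\top - v^\top G \ge 0$ directly discharge \eqref{condB:NM}, \eqref{condC:NM}, and \eqref{cond3:EDD}. Together with $v > 0$, every hypothesis of Theorem \ref{thm:EDD} holds for $\bar\Sigma_C$.

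The only subtle point, and the main obstacle I expect, is in the identity constraint step: the LP enforces $K(A_i + BF) = P_i K$ with a \emph{single} state feedback $F$ common to all vertices $i$, which is precisely what allows the convex combination argument to yield $K(\partial f(x) + BF) = P(x) K$ without leaving stray $F$-dependent residuals. If different $F_i$'s had been permitted, the combination would not close. Once this structural point is noted, the rest of the verification is just bookkeeping on the linear constraints, and the conclusion that $\bar\Sigma_C$ is $\cK$-monotone and exponentially differentially dissipative follows directly from Theorem \ref{thm:EDD}.
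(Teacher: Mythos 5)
Your proposal is correct and follows essentially the same route as the paper: the paper omits this proof, stating it is analogous to that of Proposition~\ref{prop1:ES_alg}, which likewise establishes feasibility from the full column rank of $K$ (so the equality constraints $P_iK = K(A_i+BF)$ and $HK=C$ are solvable via a left inverse, and the inequalities are met by taking $c_p,c_s,c$ sufficiently negative) and then, for $c\ge 0$, forms $P(x)=\sum_i\theta_i(x)P_i$ from the convex relaxation to discharge the hypotheses of the relevant theorem (here Theorem~\ref{thm:EDD} rather than Theorem~\ref{thm:IES}). Your additional observations --- that a single $F$ across all vertices is what lets the convex combination close, and the explicit choice $\lambda = c_s/\max_j v_j$ --- are correct details that the paper leaves implicit.
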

\begin{proof}
The proof is similar to that of Proposition \ref{prop1:ES_alg} and is omitted.
\end{proof}

As in the previous section, for $c\geq 0$, the quantities computed by \eqref{opt3:c} satisfy Theorem \ref{thm:EDD}
(for $P(x) = \sum_{i=1}^L \theta_i(x) P_i$). For $c<0$ we work through a sequence of variational updates based on linearized constraints
to build a suitable cone and feedback pair. The variational approach leads to the following optimization problem:

\begin{subequations} \label{opt4:c}
\begin{align}
&\max_{\delta K, \delta F, \delta v, \delta P_1, \dots, \delta P_L, \delta H, c_p, c_s} c \nonumber \\
\mbox{s.t. \; }
&\delta K (A_i \!+\! B F ) \!+\! K B \delta F =  \delta P_i K\!+\! P_i \delta K \\
&{\rm rank} \; (K + \delta K) = n \\
&\offd{P_i + \delta P_i} \ge c_p \\
&(K + \delta K) B \ge c_p \\
&H K + H \delta K + \delta H K \ge c_p \\
&H + \delta H \ge c_p \\
&v^\top +  \delta v^\top \ge \varepsilon_1\\
&-q^\top \!(H \!+\! \delta H) - (v^\top \!P_i \!+\! \delta v^\top \!P_i \!+\!v^\top\! \delta P_i )\ge  c_s \\
&r^\top - (v^\top K + \delta v^\top K + v^\top \delta K) B  \ge c_p \\
&c_p \ge c, \; c_s \ge c + \varepsilon_1. 
\end{align}
\end{subequations}
where $i = 1,\dots,L$ and with the additional constraint 
that $(\delta K, \delta F, \delta v, \delta P_1, \dots, \delta P_L, \delta H)$ must remain small.

The last constraint and the size of $\delta K$, $\delta F$, $\delta v$, $\delta P_1, \dots, \delta P_L$ and $\delta H$,
and the constraint ${\rm rank} \; K + \delta K =n$ can be 
enforced within an LP setting by following the approach outlined in the previous section (after \eqref{opt2:c}).
The proposed algorithm is summarized below. Remarks \ref{rem:fail_and_complexity} and \ref{rem:OF} apply
also to Algorithm \ref{alg:EDD}.

\begin{algorithm} 
\caption{Solving Problem~\ref{prob:EDD}}
\label{alg:EDD}
\begin{algorithmic}[1]
\small
\Require $\cA := \{ A_1, \dots, A_L\}$, $B$, $(q, r)$, 
initial $K$ with ${\rm rank} \; K =n$, initial $v > 0$, $\varepsilon_1>0$, $\varepsilon_2>0$, $\varepsilon_3>0$
\Ensure Solutions $K$ and $F$ to Problem~\ref{prob:EDD} or {\bf Fail}
\While {{\bf True}}
\State {Solve \eqref{opt3:c}}
\If {$c \ge  0$}
\State {\bf Return $K$ and $F$}
\EndIf
\State {Solve \eqref{opt4:c}}
\State {$K \leftarrow K + \delta K$} and {$v \leftarrow v + \delta v$}
\If {$c$ improves less than $\varepsilon_3$}
\State {\bf Return Fail}
\EndIf
\EndWhile
\end{algorithmic}
\end{algorithm}

\subsection{Example}
We consider again the nonlinear mass spring system \eqref{sys:nmech} of Section~\ref{sec:ex}.
For $q = 1$ and $r = 1$, our objective is to find a simple polyhedral cone $\cK$ and 
a $1$-dimensional dynamic output feedback controller of the form  
\begin{align}
\left\{\begin{array}{r@{}l}
\dot x_c &{}= A_c x_c + B_c y\\
u &{}= C_c x_c + D_c y + \bar u.
\end{array}\right. \quad x_c \in \bR,
\end{align}
such that the closed-loop system is $\cK$-monotone and exponentially differentially dissipative.

Using the aggregate state $x = \begin{bmatrix}x_1 \!&\! x_2 \!&\! x_c\end{bmatrix}^\top$, 
the closed-loop system is well represented by $\bar\Sigma$ in \eqref{eq:nonlinsys}
and its variational system reads
\begin{align*}
\left\{\begin{array}{r@{}l}
\begin{bmatrix}
\dot {\delta x_1} \\ \dot {\delta x_2} \\ \dot {\delta x_c}
\end{bmatrix}
&{}=
\underbrace{\begin{bmatrix}
0 & 1 & 0\\
D_c - \partial k(x_1) & 0 & C_c\\
B_c & 0 & A_c
\end{bmatrix}}_{\partial f(x_1,x_2,x_c)}
\begin{bmatrix}
\delta x_1 \\ \delta x_2 \\ \delta x_c
\end{bmatrix}
+
\begin{bmatrix}
0 \\1 \\ 0
\end{bmatrix}
\delta \bar u \\
\delta y &{}= \delta x_1,
\end{array}\right.
\end{align*}
where $-1 \leq \partial k(x_1) \leq 1$ for all $x_1$. 

We use Algorithm~\ref{alg:EDD} for control design, taking into account Remark~\ref{rem:OF} 
and the range $-1 \leq \partial k(x_1) \leq 1$. 
Thus we iteratively solve \eqref{opt3:c} and \eqref{opt4:c} for the following data:
$i = 1,2$,
\begin{align*}
A_1 + B F &:=
\begin{bmatrix}
0 & 1 & 0\\
D_c - 1 & 0 & C_c\\
B_c & 0 & A_c
\end{bmatrix} \\
A_2 + B F &:=
\begin{bmatrix}
0 & 1 & 0\\
D_c + 1 & 0 & C_c\\
B_c & 0 & A_c
\end{bmatrix} \\
B &:= \begin{bmatrix}
0 & 1 & 0
\end{bmatrix}^\top \\
C &:=
\begin{bmatrix}
1 & 0 & 0
\end{bmatrix},
\end{align*}
and a suitable selection of small $\varepsilon_1$, $\varepsilon_2$, and $\varepsilon_3$.

The initial $K$ is
\begin{align*}
K 
:= 
\begin{bmatrix}
C \\ \bar K
\end{bmatrix}, \;
\bar K :=
\begin{bmatrix}
0 &  0.707 & 0.707\\
0 & -0.707 & 0.707\\
-0.416 &  0.572 &  0.707\\
-0.416 & -0.572 &  0.707\\
-0.672 &  0.219 &  0.707\\
-0.672 & -0.219 &  0.707
\end{bmatrix}.
\end{align*}
Since the first row of $K$ is $C$, the constraints $C = H K$ and $H \ge 0$ hold for
\begin{align*}
H := 
\begin{bmatrix}
1 & 0 & \cdots & 0 
\end{bmatrix}.
\end{align*}
Thus, if we update only $\bar K$, then $C = H K$ and $H \ge 0$ always hold,
and the constraints for $H$ and $\delta H$ can be removed from
\eqref{opt3:c} and \eqref{opt4:c}.

\begin{figure}[tb]
\begin{center}
\includegraphics[width=0.68\columnwidth]{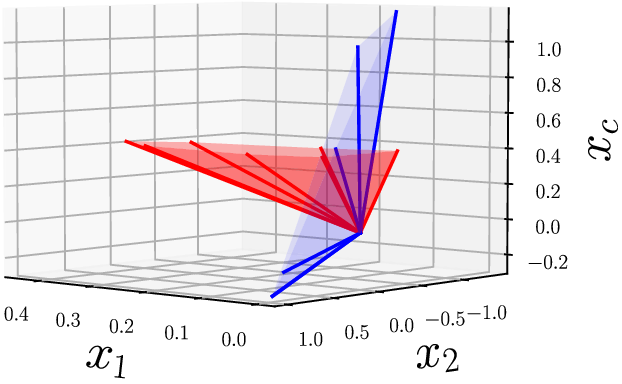}\vspace*{2mm}
\caption{Initial cone in red and computed cone in blue.}
\label{fig:cone}
\end{center}
\end{figure}

We apply a modification of Algorithm~\ref{alg:EDD}.
A solution to Problem \ref{prob:EDD} is obtained as
\begin{align*}
\bar K 
&=
\begin{bmatrix}
0.995 & 0.102 & -0.00331\\
0.989 & 0.140&  0.0542\\
0.976 & 0.176 & 0.129\\
0.935 &  0.281 &  0.216\\
-0.563 & 0.230 &  0.794\\
-0.531&  0.234 &  0.814
\end{bmatrix}\\
v
&=
[\begin{matrix}
0.426 & 0.104 & 0.395
\end{matrix}\\
&\qquad \begin{matrix}
0.0982 & 0.230 & 0.870 & 2.59
\end{matrix}]^\top\\
\begin{bmatrix}
A_c & B_c\\
C_c  & D_c
\end{bmatrix}
&=
\begin{bmatrix}
-30.0 &  26.1\\
10.7 & -10.0
\end{bmatrix}.
\end{align*}
The dynamic output feedback controller above enforces $\cK$-monotonicity and exponential differential dissipativity
of the closed-loop system.
The initial and obtained cones are plotted in Fig.~\ref{fig:cone}.

\begin{secrem}
The cone derived by Algorithm~\ref{alg:EDD} depends on the initial cone, since Problem \ref{prob:EDD} has multiple solutions. 
We have tested our algorithm for different initial cones.
We have tested our algorithm from $p_\cK = 3$ by increasing the number of $p_\cK$. Until $p_\cK = 7$, we have not succeeded to obtained a solution.
For $p_\cK = 7$, each time the algorithm converges to a suitable cone/controller pair. 
These tests have shown that the elements of $F$ could become large during the iteration, which causes overflow. 
This problem can be handled by enforcing lower and upper bounds on its each element
(we have used $\pm 30$ bounds for the case above).
\red
\end{secrem}

\color{black}
\section{Conclusion}
We have revised and extended methods for analysis and synthesis of monotone systems.
We have shown that monotone systems can be variationally embedded into
positive systems. Then, using this embedding, 
we have derived linear conditions for monotonicity with respect to generic cones $\cK$,
removing the usual restriction to the positive orthant $\bR^n_+$ (cooperative systems).
We have also derived linear, i.e. scalable, conditions for incremental stability 
and differential dissipativity of general monotone systems. 
Our results allow for analysis and synthesis
of networks of monotone systems. From the control design perspective, this means that
we can stabilize a network via control design at the level of a single node/component. 
The theoretical results of the paper have been illustrated by three examples, each
capturing different scenarios. 

Theory of monotone systems is well established but often limited by the challenge of 
recognizing when a system is monotone. Simple conditions exist for monotonicity with 
respect to partial orders induced by an orthant \cite{AS:03}. In this paper we also show
that, given a simple polyhedral cone $\cK$, establishing $\cK$-monotonicity 
reduces to a linear program. However, the problem of demonstrating the \emph{existence} of a
cone $\cK$ that guarantee $\cK$-monotonicity of a nonlinear system is way harder. 
In this paper we present algorithms to answer such a question, taking advantage
of a co-design procedure for cone and feedback controller. 
The algorithms are sound but by no means provide a complete answer to the question.
They require further research to overcome limitations related to local minima and to issues of numerical stability. 
For developing numerical algorithms, we have focused on constant cones and linear feedback controllers. The theory can be extended to differential cones (i.e., non-constant cones $\cK(x)$) \cite{FS:16} and nonlinear feedback control design, which will be reported in future publication.

\appendix
\subsection{Proof of Theorem \ref{thm:NM}}\label{proof:NM}
We use Theorem~\ref{thm:mbp}.

\emph{(if part)} 
From \eqref{condf:NM},
the embedding $\delta z = K \delta x$ yields
\begin{align*}
d\bar \Sigma^P: 
\left\{\begin{array}{r@{}l}
\dot {\delta z} &{}= P(x(t)) \delta z + G \delta u\\
\delta y &{}= H \delta z,
\end{array}\right.
\end{align*}
where $\offd{P(\cdot)} \ge 0$ and $G, H \ge 0$.
Therefore, it follows that
\begin{align*}
\begin{array}{c}
K \delta x(0) \in \bR_+^{p_{\cK}}, \; \delta u(t)  \in  \bR_+^m, \; \forall t \! \ge \! 0\\
\implies \\
K \delta x(t) \in  \bR_+^{p_{\cK}}, \; \delta y(t) \in \bR_+^p, \; \forall t \! \ge \! 0.
\end{array}
\end{align*} 
Since $K \delta x \in \bR_+^{p_{\cK}}$ implies $\delta x \in \cK$, this is nothing but $\cK$-monotonicity.

\emph{(only if part)} 
Let $\delta u(t) = 0$, $t \ge 0$.
We first show that for each $x \in \bR^n$ and every $i=1,\dots, p_\cK$, there exists $c_i(x) > 0$ such that
\begin{align}\label{pf1:mbp}
(K_i)^\top (\partial f(x) + c_i(x) I_n) \delta x \ge 0 \quad \nonumber\\
\forall \delta x \in \bR^n
\mbox{ s.t. }
K \delta x \ge 0.
\end{align}
Recalling $\cK = \{ \delta x \in \bR^n : K \delta x \ge 0\}$, we consider two cases (i) $(K_i)^\top \delta x > 0$, $\delta x \in \cK$ and (ii) $(K_i)^\top \delta x = 0$, $\delta x \in \cK$.
In case (i), there exists a sufficiently large $c_i(x) > 0$ such that
\begin{align*}
(K_i)^\top (\partial f(x) + c_i(x) I_n) \delta x \ge 0  \quad \nonumber\\
\forall \delta x \in \cK
\mbox{ s.t. }
(K_i)^\top \delta x > 0.
\end{align*}
In case (ii), $\cK$-monotonicity implies
\begin{align*}
\begin{array}{c}
(K_i)^\top \partial f(x) \delta x \ge 0, \quad
\forall \delta x \in \cK
\mbox{ s.t. }
(K_i)^\top \delta x = 0.
\end{array}
\end{align*}
Noting $(K_i)^\top \delta x = 0$, we obtain
\begin{align*}
(K_i)^\top ( \partial f(x) \delta x + c_i(x) I_n) \delta x \ge 0  \quad \nonumber\\
\forall \delta x \in \cK
\mbox{ s.t. }
(K_i)^\top \delta x = 0.
\end{align*}
Thus, we have~\eqref{pf1:mbp}.

Next, from~\eqref{pf1:mbp}, Farkas's lemma \cite[Corollary 7.ld]{Schrijver:98} implies that 
for each $i=1,\dots, p_\cK$, there exists $P_i: \bR^n \to \bR^{p_{\cK}}$ such that
\begin{align*}
(K_i)^\top (\partial f(x) + c_i(x) I_n) = P_i^\top (x) K,
\quad
P_i^\top (x) \ge 0,
\end{align*}
i.e.,
\begin{align*}
&\underbrace{\begin{bmatrix}
(K_1)^\top \\
\vdots\\
(K_{p_\cK})^\top
\end{bmatrix}}_{K}\partial f(x)
+
\begin{bmatrix}
c_1(x) & \\
& \ddots \\
& & c_{p_\cK}(x)
\end{bmatrix}
\underbrace{\begin{bmatrix}
(K_1)^\top \\
\vdots\\
(K_{p_\cK})^\top
\end{bmatrix}}_{K}\\
&=
\begin{bmatrix}
(P_1)^\top (x)\\
\vdots\\
(P_{p_\cK})^\top  (x)
\end{bmatrix}K.
\end{align*}
Therefore, \eqref{condf:NM} holds for
\begin{align*}
P(x) &:= \begin{bmatrix} P_1(x) & \dots & P_{p_{\cK}} (x) \end{bmatrix}^\top \\
&\qquad - {\rm diag}\{c_1(x),\dots,  c_{p_\cK} (x)\}.
\end{align*}

Next, $\cK$-monotonicity for $\delta x(0)=0$ implies \eqref{condB:LM}. 
Finally, $\delta y \in \bR^p$ for any $\delta x \in \cK$ implies \eqref{condC:LM}. 
\hfill \QED

\subsection{Proof of Theorem~\ref{thm:IES}}\label{proof:IES}
First, we consider the embedded variational system $\dot{\delta z} = P (x(t)) \delta z$,
which is a positive linear time-varying system for every trajectory $x(\cdot)$.

Define the function $V : \bR_+^{p_{\cK}} \to \bR_+$ given by $V(\delta z) = v^\top \delta z$ and note that  
$V(\delta z)> 0$ for all $\delta z\in \bR_+^{p_{\cK}} \setminus\{0\}$.
Consider any trajectory $\delta z(\cdot)$ such that $\delta z(0) \in \bR_+^{p_{\cK}}$. 
Positivity guarantees that $\delta z(t) \in \bR_+^{p_{\cK}}$ for all $t \geq 0$,
thus \eqref{cond:IES} guarantees that 
$\dot{V} = v^\top P(x(t))\delta z(t) \leq -\lambda \delta z(t)$ for all $t \geq 0$.
It follows that there exists $k' \ge 1$ such that $| \delta z(t) |_1 \le k' e^{- \lambda t} | \delta z(0)|_1$ for all $\delta z(0) \in  \bR_+^{p_{\cK}}$.

Take now a generic trajectory $\delta z(\cdot) \in  \bR^{p_{\cK}}$ and note that it can be decomposed into
$\delta z(\cdot) = \delta z_1(\cdot) - \delta z_2(\cdot)$ where $\delta z_i(0) \in \bR_+^{p_{\cK}}$, $i \in \{1,2\}$.
Thus, there exists $k' \ge 1$, such that 
$
| \delta z(t) |_1 \le k' e^{- \lambda t} | \delta z(0)|_1.
$
See also \cite[Corollary 4.8]{KBC:20}.

Substituting $\delta z(0) = K \delta x(0)$ into the above yields
\begin{align*}
| K \delta x(t) |_1 \le k' e^{- \lambda t} | K \delta x(0)|_1.
\end{align*}
Since ${\rm rank}\; K =n$, there exists $k \ge 1$ such that
\begin{align}\label{pf1:IES}
| \delta x(t) |_1 \le k e^{- \lambda t} |  \delta x(0)|_1
\end{align}
for any $t \ge 0$ and $\delta x(0)  \in \bR^{p_{\cK}}$ as long as $x(t)$ exists.

Next, we show that \eqref{pf1:IES} implies the existence of $x(t)$. 
Since $\delta x = f(x(t))$ satisfies $\dot{\delta x} = \partial f (x(t)) \delta x$, it follows from \eqref{pf1:IES} that
\begin{align*}
| f(x(t)) |_1 \le k e^{- \lambda t} | f(x(0)) |_1,
\end{align*}
which implies that $\dot x(t) = f(x(t))$ is a bounded function of $t \ge 0$. 
Thus, $x(t)$ exists for any $t \ge 0$ and $x(0) \in \bR^n$.

Finally, we consider the line segment $\gamma (s) = s x + (1 - s) x'$, $s \in [0,1]$.
Then, the solution to $\dot x = f(x)$ starting from $\gamma (s)$, denoted by $\phi(t, \gamma (s))$, satisfies
\begin{align*}
\frac{\partial}{\partial t} \frac{\partial \phi(t, \gamma (s))}{\partial s}
= \frac{\partial f(\phi(t, \gamma (s)))}{\partial \phi} \frac{\partial \phi(t, \gamma (s))}{\partial s}.
\end{align*}
Substituting $\delta x = \partial \phi/\partial s$ into \eqref{pf1:IES} and 
taking the integration with respect to $s$ over $[0, 1]$ lead to
\begin{align*}
| x(t) - x'(t) |_1 &= \left| \int_0^1 \frac{\partial \phi(t, \gamma (s))}{\partial s}ds \right|_1  \\
&\le \int_0^1 \left| \frac{\partial \phi(t, \gamma (s))}{\partial s} \right|_1 ds\\
& \le k e^{- \lambda t} \int_0^1 \left| \frac{\partial \gamma (s)}{\partial s} \right|_1 ds \\
& \le k e^{- \lambda t} | x(0) - x'(0) |_1.
\end{align*}
for any $t \ge 0$ and $(x(0), x'(0) ) \in \bR^n \times \bR^n$.
\hfill \QED

\subsection{Proof of Corollary~\ref{cor:IES}}\label{proof:IES_cor}
We consider the embedded variational system  $\dot{\delta z} = P (x(t)) \delta z$ by $\delta z = K \delta x$ and also $\dot{\delta \bar z} = |\offd{P (x(t))}| \delta \bar z$. For the same initial condition $\delta z(0) = \delta \bar z(0) = \delta z_0 \in \bR^n$, it follows that $- \delta \bar z(\cdot) \le \delta z(\cdot) \le \delta \bar z (\cdot)$. Moreover, according to the proof of Theorem~\ref{thm:IES}, \eqref{cond:IES2} implies 
for some $k' \ge 1$, $| \delta \bar z(t) |_1 \le k' e^{- \lambda t} | \delta z_0|_1$ for all $\delta z_0\in  \bR^{p_{\cK}}$.
Therefore, we have $| \delta z(t) |_1 \le k' e^{- \lambda t} | \delta z_0|_1$. The rest is the same as the proof of Theorem~\ref{thm:IES}.
\hfill \QED

\subsection{Proof of Theorem~\ref{thm:EDD}}\label{proof:EDD}
Taking into account \eqref{condf:NM}, 
\eqref{cond2:EDD} and \eqref{cond3:EDD} imply that
\begin{align*}
v^\top P(x) K \delta x + \lambda v^\top K \delta x &\le - q^\top H K \delta x\\
v^\top G \delta u &\le  r^\top \delta u
\end{align*}
for all $(\delta x, \delta u) \in  (\cK,  \bR_+^m)$.
It follows that
\begin{align*}
v^\top ( P (x)  K \delta x +  G \delta u) + \lambda v^\top K \delta x 
&\le r^\top \delta u - q^\top H K \delta x.
\end{align*}
That is,
\begin{align*}
v^\top K ( \partial f (x) \delta x +  B \delta u) + \lambda v^\top K \delta x 
&\le r^\top \delta u - q^\top C \delta x.
\end{align*}
which corresponds to the dissipation inequality \eqref{eq2:EDD}.
\eqref{eq:EDD} follows by integration.
\hfill \QED

\subsection{Proof of Theorem \ref{thm3:net}}\label{proof3:net}

First, we show monotonicity using Theorem \ref{thm:NM}.
The $i$th subsystem of the interconnected system is
\begin{align*}
\dot x_i = f_i (x_i ) + B_i W_i ( y_1, \dots, y_N),
\end{align*}
and its variational system is
\begin{align*}
\dot {\delta x}_i &= \partial f_i (x_i (t) ) \delta x_i \\
&\quad
+ B_i \sum_{j = 1}^N \partial_{y_j} W_i(y_1(t), \dots, y_N(t))  C_j \delta x_j.
\end{align*}
From \eqref{condf:NM}, the embedding $\delta z_i = K_i \delta x_i$ yields
\begin{align*}
\dot {\delta z}_i &= P_i ( x_i (t) ) \delta z_i \\
&\quad
+ G_i \sum_{j = 1}^N \partial_{y_j} W_i(y_1(t), \dots, y_N(t))  H_j \delta z_j.
\end{align*}
Its compact form is represented by $\dot {\delta z} = P(x (t)) \delta z$, where
\begin{align}\label{pf3:net1}
P(x) &:= 
{\rm diag} \{P_1(x), \dots, P_N(x)\} \nonumber\\
&\qquad
+
\begin{bmatrix}
G_1 \partial_{y_1} W_1(\cdot) H_1 &  \ddots & G_1 \partial_{y_N} W_1(\cdot) H_N\\ 
\vdots & \ddots & \vdots\\
G_N \partial_{y_1} W_N(\cdot)  H_1 & \cdots & G_N \partial_{y_N} W_N (\cdot) H_N 
\end{bmatrix}.
\end{align}
$\cK_i$-monotonicity, $i=1,\dots,N$ (with Theorem \ref{thm:NM}) and item (ii) imply $\offd{P(\cdot )} \ge 0$, and thus 
the interconnected system is $(\cK_1 \times \cdots \times \cK_N)$-monotone.
Note that the pair of items (iia) and (iib) is equivalent to $\offd{P(\cdot )} \ge 0$.

Next, we show incremental exponential stability.
Define the collective vector $v := [\begin{matrix}v_1^\top & \cdots & v_N^\top \end{matrix}]^\top > 0$.
For the $i$th component,
it follows from $G_i, H_i \ge 0$ and items i) and iii) that 
\begin{align}\label{pf3:net2}
&v_i^\top P_i + \sum_{j=1}^N v_j^\top G_j \partial_{y_i} W_j H_i \nonumber\\
&\le v_i^\top P_i + \sum_{j=1}^N r_j^\top \partial_{y_i} W_j H_i \nonumber\\
&\le v_i^\top P_i + q_i^\top H_i \le - \lambda_i v_i^\top.
\end{align}
In other words, we have
\begin{align*}
v &> 0\\
- v^\top P(x) &> \min_{i=1,\dots,N} \{ \lambda_i \} v^\top,
\quad 
\forall x \in \bR^{n_1 + \cdots + n_N}.
\end{align*}
Therefore, from Theorem~\ref{thm:IES}, the networked interconnection is incremental exponential stable.

Finally, we consider the case where item (ii) is replaced by items (iia) and (iib).
Item (i) with $r_i^\top = v_i^\top G_i$, $H_i \ge 0$, and item (iii) for $i=1,\dots,N$ imply~\eqref{pf3:net2}.
\hfill \QED

\subsection{Proof of Theorem \ref{thm4:net}}\label{proof4:net}
As shown in Appendix~\ref{proof3:net}, the embedded networked interconnection 
$\dot {\delta z} = P(x (t)) \delta z$ with $P(x)$ in \eqref{pf3:net1} satisfies $\offd{P(\cdot )} \ge 0$ if item (ii) holds.
Next, we show that 
if items (i) and (iii) hold, then
the vector $w := [\begin{matrix}w_1^\top & \cdots & w_N^\top \end{matrix}]^\top $
satisfies
\begin{subequations}\label{cond:IESw}
\begin{align}
w &> 0, \label{cond1:IESw}\\
- P(x) w &\ge \lambda w,  \label{cond2:IESw}
\quad 
\forall x \in \bR^n, \; n:=n_1 + \cdots + n_N.
\end{align}
\end{subequations}
\eqref{cond1:IESw} follows from $w_i > 0$, $i=1,\dots, N$ in item (i).
We verify the second inequality for the $i$th block component,
it follows from $G_i, H_i \ge 0$ and items i) and iii) that 
\begin{align*}
&P_i w_i + G_i \sum_{j=1}^N \partial_{y_j} W_i  H_j w_j \nonumber\\
&\le P_i w_i + G_i \sum_{j=1}^N \partial_{y_j} W_i  r_j\nonumber\\
&\le P_i  w_i + G_i q_i \le - \lambda_i w_i.
\end{align*}
Therefore, we have \eqref{cond2:IESw} for $\lambda = \min_{i = 1, \dots N} \{ \lambda_i \}$.
In the case where item (ii) is replaced by items (iia) and (iib), 
\eqref{cond2:IESw} can be shown from item (i) with $r_i = H_i w_i$, $G_i \ge 0$, and item (iii) for $i=1,\dots,N$.

Finally, we show that \eqref{cond:IESw} implies the incremental exponential stability of the networked interconnection. 
Define the function $V : \bR_+^n \to \bR_+$ given by $V(\delta z) = \max_{i=1,\dots,n}\{ \delta z_i/w_i\}$ and note that  
$V(\delta z)> 0$ for all $\delta z\in \bR_+^n\setminus\{0\}$.
For this function, we define the following set of the indexes, $J(\delta z) =\{ j=1, \dots, n : \delta z_j/w_j = V(\delta z)  \}$.
From its definition, we have
\begin{align*}
\frac{\delta z_i}{w_i}
\le
\frac{\delta z_j}{w_j}
\quad
\iff
\quad
\delta z_i \le \frac{w_i}{w_j} \delta z_j
\end{align*}
for all $i=1,\dots, n$ and $j \in J(\delta z)$.
Then, the upper right Dini derivative of $V(\delta z)$ (see. e.g., \cite{Danskin:66}) satisfies
\begin{align*}
D^+ V(\delta z) 
= \max_{j \in J(\delta z)}  \frac{\delta \dot z_j}{w_j}
&= \max_{j \in J(\delta z)}  \frac{1}{w_j} \sum_{i=1}^n P_{j,i}( \cdot ) \delta z_i\\
&\le \max_{j \in J(\delta z)}  \frac{1}{w_j^2} \sum_{i=1}^n P_{j,i}( \cdot ) w_i \delta z_j,
\end{align*}
and thus, from \eqref{cond2:IESw}, 
\begin{align*}
D^+ V(\delta z) 
\le - \lambda \max_{j \in J(\delta z)}  \frac{\delta z_j}{w_j}
= - \lambda V(\delta z).
\end{align*}
This and positivity guarantee that there exists $k'\ge 1$ such that $| \delta z(t) |_{\infty} \le k' e^{- \lambda t} | \delta z(0)|_{\infty}$ for all $\delta z(0) \in  \bR_+^n$.
From the equivalence between the vector $1$- and $\infty$-norms, the inequality also holds with respect to $| \cdot |_1$.
Thus, incremental exponential stability follows as in the proof of Theorem~\ref{thm:IES}.
\hfill \QED

\subsection{Proof of Proposition \ref{prop1:ES_alg}}\label{proof:ES_alg}
First, we show feasibility.
For any $K$ with ${\rm rank} \; K = n$, $A_i$, $B$, and $F$, there exists $P_i$ satisfying the first constraint.
For any $P_i$, $v$ and $\varepsilon_1 > 0$, there exist $c$, $c_p$, and $c_s$ satisfying the other constraints. 

Next, if $c \ge 0$, then $c_p \ge 0$ and $c_s > 0$.
From \eqref{eq:cvx}, for each $x \in \bR^n$, there exist $\theta_i(x)$, $i=1,\dots,L$ such that
$\partial f(x) = \sum_{i=1}^L \theta_i(x) A_i$.
Therefore, the first and second constraints with $c_p \ge 0$ imply \eqref{condA:NM} for the closed-loop system \eqref{sys:cl}, 
where $P(x) = \sum_{i=1}^L \theta_i(x) P_i$.
The third one with $c_s > 0$ implies \eqref{cond:IES} for some $\lambda > 0$.
\hfill \QED

\bibliographystyle{IEEEtran}
\bibliography{PositiveStability}

\if0
\begin{IEEEbiography}{Yu Kawano} (M'13)  
has since 2019 been an Associate Professor in 
the Graduate School of Advanced Science and Engineering at Hiroshima University. 
He received the M.S. and Ph.D. degrees in 
Engineering from Osaka University, Japan, in 2011 and 2013, respectively. 
As a Post-Doctoral Researcher, he then joined Kyoto University, Japan and 
moved in 2016 to the University of Groningen, The Netherlands. 
He has held visiting research positions at Tallinn University of Technology, Estonia and 
the University of Groningen.
His research interests include nonlinear systems, complex networks, 
model reduction, and privacy of control systems.
He is an Associate Editor for Systems and Control Letters and 
a Member of the EUCA Conference Editorial Board.
\end{IEEEbiography}

\begin{IEEEbiography}{Fulvio Forni} 
received the Ph.D. degree in 
computer science and control engineering from 
the University of Rome Tor Vergata, Rome, Italy, in 2010. 
In 2008--2009, he held visiting positions with the LFCS, 
University of Edinburgh, U.K. and with the CCDC of 
the University of California Santa Barbara, USA. 
In 2011--2015, he held a post-doctoral position with 
the University of Liege, Belgium (FNRS). 
He is currently a Lecturer with the Department of Engineering, 
University of Cambridge, U.K.
He received the IEEE CSS George S. Axelby Outstanding Paper Award in 2020.
\end{IEEEbiography}
\fi

\end{document}